\newtheorem{prop}{Proposition}
\newtheorem{lemma}{Lemma}
\newtheorem{coro}{Corollary}
\theoremstyle{remark}
\newtheorem{exm}{Example}
\theoremstyle{definition}
\newtheorem{definition}{Definition}
\def\states{\mathfrak{S}}
\def\dens{\mathfrak{D}}
\def\effects{\mathcal{E}}
\def\Ha{\mathcal{H}}
\def\conv{\text{conv}}
\def\Tr{\text{Tr}}
\def\I{\mathds{1}}
\def\<{\langle}
\def\>{\rangle}
\begin{document}
%
%
\title{On the properties of spectral effect algebras}

\author{Anna Jenčová}
\email{jenca@mat.savba.sk}
\affiliation{Mathematical Institute, Slovak Academy of Sciences, \v Stef\' anikova 49, Bratislava, Slovakia}
\orcid{0000-0002-4019-268X}

\author{Martin Plávala}
\email{martin.plavala@mat.savba.sk}
\affiliation{Mathematical Institute, Slovak Academy of Sciences, \v Stef\' anikova 49, Bratislava, Slovakia}
\orcid{0000-0002-3597-2702}

\begin{abstract}
The aim of this paper is to show that there can be either only one or uncountably many contexts in any spectral effect algebra, answering a question posed in [S. Gudder, Convex and Sequential Effect Algebras, (2018), arXiv:1802.01265]. We also provide some results on the structure of spectral effect algebras and their state spaces and investigate the direct products and direct convex sums of spectral effect algebras. In the case of  spectral effect algebras with sharply determining state space, stronger properties can be proved: the spectral decompositions are essentially unique, the algebra is sharply dominating and the set of its sharp elements is an orthomodular lattice. The article also contains a list of open questions  that might provide interesting future research directions.
\end{abstract}

\maketitle

\section{Introduction}
Spectrality plays an important role in quantum theory. On one hand spectral theorem for operators over Hilbert spaces is a well known result found in many textbooks, on the other hand various forms of spectrality are used in operational derivations of quantum theory. There is quite a number of works concernng this line of research (see e.g. \cite{Hardy-derivation, ChiribellaDArianoPerinotti-derivQT} or the recent paper \cite{vandeWetering-filters} and references therein), where quantum theory is derived  from a set of postulates,  and obtaining  the spectral decomposition of states and effects is a crucial step. Spectrality was also studied in the framework of generalized probability theory \cite{Gudder-spectralGPT, ChiribellaScandolo-diagonalization} and is also important for the definition of entropy in operational frameworks \cite{KimuraIshiguroFukui-entropies, KimuraNuidaImai-indecomposability, BarnumBarrettClarkLeiferSpekkensStepanikWilceWilke-entropy, BarnumBarrettKrummMuller-entropy}.

In the special case of quantum theory over a finite-dimensional complex Hilbert space $\Ha$,  self-adjoint operators such that $0 \leq A \leq \I$ are called effects, here $A \geq 0$ means that $A$ is positive semi-definite and $\I$ is the identity operator. It is well-known that the effects  are  used in description of quantum measurements \cite{HeinosaariZiman-MLQT}. In the abstract setting,  effect algebras were defined in \cite{FoulisBennett-effAlg} as a generalization of the set of Hilbert space  effects. Since then a variety of mathematical results has been developed in the field and it has attracted the interest of both mathematicians and physicist alike.

Since spectrality is one of the key features of quantum theory and effects are one of the building blocks of operational generalizations of quantum theory \cite{ChiribellaDArianoPerinotti-GPT}, it is clear that effect algebras possessing some form of spectral property (or simply spectral effect algebras) are important  from the viewpoint of foundations of quantum theory as we may hope that any other physical theory, such as quantum theory of gravity, might have some form of spectral properties.

Spectral effect algebras were recently introduced in \cite{Gudder-contexts} as convex effect algebras such that any element can be decomposed as a finite combination of indecomposable effects summing up to the identity. Such sets of effects are called contexts.
It was proved that a spectral effect algebra is classical if and only if it possesses a single context. Moreover,  necessary and sufficient conditions were obtained for the algebra to be  affinely isomorphic to the set of effects over a complex Hilbert space. 
In this case, the algebra must have  infinitely many contexts which corresponded to the rank-1 projective measurements.
 It was left as an open question in \cite{Gudder-contexts} whether the number of contexts in a spectral effect algebra can be finite and greater than one, with the author conjectured that this is not possible. The conjecture was essecntially  proved in \cite{vandeWetering-seqProdSpaces} for sequential effect algebras, in showing that these are affinely isomorphic to unit intervals in Euclidean Jordan algebras.

The aim of the present paper is to  present a proof of this conjecture for general spectral effect algebras. Along the way, we 
 take a closer look on some of the properties of  spectral effect algebras and formulate a new set of open questions, that might lead to interesting research directions.

The article is organized as follows: in Sec. \ref{sec:effAlg} we present the basic definitions and provide examples of convex effect algebras. In Sec. \ref{sec:properties} we introduce spectral effect algebras and prove some basic results on properties of spectral decompositions and  the structure of the state space. In Sec. \ref{sec:number} we prove the main result that a spectral effect algebra may contain only either one or uncountably many contexts. In Sec. \ref{sec:composition} we inspect two standard constructions with convex effect algebras: the direct product and the direct convex sum and we show that while the direct product of spectral effect algebras is again a spectral effect algebra, the direct convex sum of spectral effect algebras is not spectral. In Sec. \ref{sec:sharplyDetermining} we look at a special case of spectral effect algebras that have sharply determining state space and we show that in such setting we obtain a stronger spectral decomposition,  analogical to the result \cite[Proposition 18]{vandeWetering-filters}, moreover, the algebra is sharply dominating and the set of its sharp elements is an orthomodular lattice. The final Sec. \ref{sec:conclusion} contains the conclusions and open questions.

\section{Properties of convex effect algebras} \label{sec:effAlg}
In this section we are going to present definition and properties of convex effect algebras. We closely follow the definitions used in \cite{Gudder-contexts} with a slightly different notation,  more natural for linear effect algebras that are closely related to general probabilistic theories.
\begin{definition} \label{def:effAlg-def}
An effect algebra is a system $(E, 0, 1, +)$, where $E$ is a set containing at least one element, $0, 1 \in E$ and $+$ is a partial binary operation on $E$. Let $a, b \in E$, then we write $a+b \in E$ whenever $a+b$ is defined (and hence yields an element of $E$). Moreover we require that $(E, 0, 1, +)$ satisfies the following conditions:
\begin{itemize}
\item[(E1)] if $a+b \in E$ then $b+a \in E$ and $a+b = b+a$,
\item[(E2)] if $a+b \in E$ and $(a+b)+c \in E$, then $a+(b+c) \in E$, $a+(b+c) = (a+b)+c$,
\item[(E3)] for every $a \in E$ there is unique $a' \in E$ such that $a + a' = 1$, we usually denote $a' = 1-a$,
\item[(E4)] if $a + 1 \in E$, then $a = 0$.
\end{itemize}
\end{definition}

\begin{definition}
An effect algebra $E$ is convex if for every $a \in E$ and $\lambda \in [0, 1] \subset \mathbb{R}$ there is an element $\lambda a \in E$ such that for all $\lambda, \mu \in [0, 1]$ and $a, b \in E$ we have
\begin{itemize}
\item[(C1)] $\mu (\lambda a) = (\lambda \mu) a$,
\item[(C2)] if $\lambda + \mu \leq 1$, then $\lambda a  + \mu a \in E$ and $(\lambda + \mu) a  = \lambda a + \mu a$,
\item[(C3)] if $a + b \in E$, then $\lambda a + \lambda b \in E$ and $\lambda (a+b) = \lambda a + \lambda b$,
\item[(C4)] $1 a = a$.
\end{itemize}
\end{definition}
The above definition of convex effect algebra allows one to define convex combinations of the form $\lambda a + (1-\lambda) b$ for $\lambda \in  [0, 1]$.
\begin{definition}
Let $E$, $F$ be  effect algebras. A map $\phi: E \mapsto F$ is called additive if for $a, b \in E$, $a+b\in E$  we have:
  $\phi(a) + \phi(b) \in F$ and $\phi(a+b) = \phi(a) + \phi(b)$. An additive map such that $\Phi(1)=1$ is called a morphism.
A morphism is an isomorphism if it is surjective and for $a,b\in E$, $\phi(a) + \phi(b) \in F$ implies $a+b \in E$.
If $E$ and $F$ are convex effect algebras, then a morphism $\Phi$ is affine if
\begin{equation*}
\Phi(\lambda a)=\lambda\Phi(a)
\end{equation*}
for $a\in E$, $\lambda\in [0,1]$.
\end{definition}

\begin{definition}
A state on an effect algebra $E$ is a morphism $s: E \mapsto [0,1] \subset \mathbb{R}$. The set of states on an effect algebra will be denoted $\states(E)$.
\end{definition}
It was proved in \cite{Gudder-effectAlgebras} that any state on a convex effect algebra is affine.
The set of states is also referred to as state space of the effect algebra and it will play an important role in later constructions.

Let $V$ be a real vector space with a pointed convex cone $P$, that is $P \cap (-P) = \{ 0 \}$ where $0$ denotes the zero vector. For $v, w \in V$ we define $v \geq w$ if and only if $v - w \in P$. Then $\le$ is a partial order in $V$  and $(V,P)$ is an ordered vector space.
Let $u \in P$, then the set
\begin{equation*}
[0, u] = \{ v \in V: 0 \leq v \leq u \}
\end{equation*}
is an effect algebra with the operation $+$ defined as the sum of the vectors and for $a, b \in [0, u]$ we have $a+b \in [0, u]$ if and only if $a+b \leq u$ which is exactly why we have chosen such unusual notation in Def. \ref{def:effAlg-def}. Also note that in this case $u$ is the unit of the effect algebra $[0, u]$, i.e. we have $1 = u$.

\begin{definition}
A linear effect algebra is an effect algebra of the form $[0, u]$ for some ordered  vector space $(V,P)$ and  $u\in P$.
\end{definition}

The following is an important result.
\begin{prop}
Every convex effect algebra is affinely isomorphic to a linear effect algebra.
\end{prop}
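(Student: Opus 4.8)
The plan is to realize the convex effect algebra $E$ as a linear effect algebra by embedding it into a suitable ordered vector space built from its state space. Since any state on a convex effect algebra is affine (by the cited result of Gudder), the family $\states(E)$ of states gives rise to a natural evaluation map, and the idea is to let $V$ be an appropriate space of (affine) functionals or, dually, to take the span of the image of $E$ under the total evaluation map $a \mapsto (s \mapsto s(a))$ into $\mathbb{R}^{\states(E)}$. I would then need to check that this map is injective, additive, and affine, that its image sits inside an interval $[0,u]$ for the cone of pointwise-nonnegative functions with $u$ the constant function $1$, and that the inverse map (on the image) is additive in the effect-algebra sense, so that we indeed get an \emph{isomorphism} of effect algebras and not merely a morphism.

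Concretely, first I would set $V_0 = \lin(E)$ inside $\mathbb{R}^{\states(E)}$ via $\hat a(s) = s(a)$, and more honestly, construct $V$ directly as the quotient of formal differences: consider pairs $(a,b)$ with $a,b\in E$ modulo the relation generated by the convex and partial-addition structure, i.e. the Grothendieck-type construction that turns the convex effect algebra into an ordered vector space with order unit. The convex structure (axioms (C1)--(C4)) is exactly what is needed to define scalar multiplication by rationals and then, by a suitable completeness or by just allowing all reals via density, by arbitrary reals; axioms (E1)--(E2) give associativity and commutativity of $+$, so the formal differences form a group, hence a $\mathbb{Q}$-vector space, and then an $\mathbb{R}$-vector space. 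The cone $P$ is the set of elements of the form $\lambda a$ for $\lambda \geq 0$, $a \in E$, and one checks it is pointed using (E3)--(E4) and the separating property of states.

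Second, I would verify that $E$ is affinely isomorphic to $[0,u] = \{v \in V : 0 \le v \le u\}$ where $u$ is the image of $1$. The inclusion $E \hookrightarrow [0,u]$ is clear from the construction; the reverse inclusion — that every $v$ with $0\le v\le u$ actually comes from an element of $E$ — is where one uses that $v$ and $u-v$ are both nonnegative combinations $\lambda a$, $\mu b$ with $\lambda a + \mu b = u$, which forces (after rescaling) an element of $E$ via the convex axioms. Finally, additivity in both directions must be checked: if the images satisfy $\hat a + \hat b \le u$ then $a + b$ is defined in $E$ and maps to $\hat a + \hat b$; this is essentially built into how $P$ and the order were defined.

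The main obstacle I expect is the careful construction of the ordered vector space together with the well-definedness of real scalar multiplication: passing from the rational scalars provided directly by the convex-effect-algebra axioms to genuine real scalars requires either an explicit completion argument or, more cleanly, realizing $V$ concretely as functions on $\states(E)$ so that real multiplication is free — but then one must separately argue that states separate points of $E$, i.e. that $a \ne b$ implies $s(a) \ne s(b)$ for some state $s$. Establishing this separation property (a Hahn--Banach / order-unit argument, or invoking that convex effect algebras have "enough" states) is the technical heart of the proof; everything else is a routine but somewhat lengthy verification of the effect-algebra and affine axioms.
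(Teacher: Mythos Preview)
The paper does not prove this proposition at all: it simply cites the representation theorem of Gudder and Pulmann\'a, so there is no ``paper's own proof'' to compare against beyond noting that the original argument is an intrinsic construction of the ordered vector space from $E$, with no reference to states.

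Your proposal, by contrast, ultimately rests on the claim that states separate points of $E$, and this is a genuine gap: it is \emph{false} for general convex effect algebras. Take $V=\mathbb{R}^2$ with the lexicographic cone $P=\{(x,y): x>0\}\cup\{(0,y): y\ge 0\}$ and order unit $u=(1,0)$. Then $E=[0,u]$ is a convex effect algebra, but the only state is $s(x,y)=x$, which fails to distinguish $(\tfrac12,0)$ from $(\tfrac12,1)$. So the evaluation map $a\mapsto(s\mapsto s(a))$ into $\mathbb{R}^{\states(E)}$ is not injective, and your first route collapses. This also shows that invoking ``the separating property of states'' to get pointedness of the cone in your Grothendieck-type construction is circular: in general one does not have such a property until \emph{after} one has the linear representation and imposes further hypotheses (Archimedeanity; cf.\ the paper's Proposition on order-determining states, proved only for \emph{spectral} algebras).

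Your second route---the intrinsic Grothendieck-type construction---is the right idea and is essentially what Gudder--Pulmann\'a do, but you misidentify the obstacles. There is no issue of passing from rational to real scalars: axioms (C1)--(C4) already give multiplication by every $\lambda\in[0,1]$, and the key observation is that $\tfrac12 a+\tfrac12 b$ is \emph{always} defined (since $\tfrac12 a\le \tfrac12\le (\tfrac12 b)'$), so one obtains a total commutative monoid operation from which the enveloping real vector space is built directly. Pointedness of the cone then follows from the cancellation law in effect algebras ($a+c=b+c$ implies $a=b$), not from states. The honest work lies in checking that the quotient defining $V$ is well behaved and that the image of $E$ is exactly $[0,u]$, both of which are algebraic verifications using (E1)--(E4) and (C1)--(C4); states play no role.
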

\begin{proof}
See \cite{GudderPulmannova-representation}.
\end{proof}
From now on we are going to assume that all of the effect algebras we will work with are convex.  Below, we omit the isomorphism and  identify convex effect algebras with the linear effect algebras they are isomorphic to.
Moreover, we may and will assume that the  interval $[0,u]$ generates the ordered vector space $(V,P)$, so that $u$ is an order unit in $(V,P)$, \cite[Lemma 3.1]{GudderPulmannovaBugajskiBeltrametti-sharpEffects}.

\begin{definition}\label{def:sharpex}
Let $f \in E$. We say that
\begin{itemize}
\item  $f$ is one-dimensional if $f \neq 0$ and for $g \in E$ we have that $f \geq g$ implies $g = \lambda f$ for some $\lambda \in [0, 1]$;
\item $f$ is sharp if $f \geq g$ and $1-f \geq g$ implies $g = 0$;
\item $f$ is extremal if $f=\lambda g_1+(1-\lambda) g_2$ for some $\lambda\in (0,1)$ implies that $f=g_1=g_2$.
\end{itemize}
The set of sharp elements will be denoted by $S(E)$. The set of sharp one-dimensional elements  will be denoted by $S_1(E)$. 
\end{definition}

It was shown in \cite[Lemma 4.4]{GudderPulmannovaBugajskiBeltrametti-sharpEffects} that any extremal element is sharp, but the converse is not necessarily true. Moreover,  for any sharp effect $f$ there are states $s_0, s_1 \in \states(E)$ such that $s_0(f) = 0$, $s_1(f) = 1$. Note that in general the states $s_0$ and $s_1$ do not have to be unique. In general probabilistic theories \cite{KimuraNuidaImai-indecomposability} one-dimensional effects are called indecomposable and they play a key role in the results on simulation irreducible measurements \cite{FilippovHeinosaariLeppajarvi-GPTcompatibility}. Some authors also refer to sharp one-dimensional effects as atomic \cite{vandeWetering-seqProdSpaces}. In what follows we will be interested in the properties of sharp one-dimensional effects.
\begin{prop} \label{prop:effAlg-sharp1D}
Let $E$ be a convex effect algebra, then a one-dimensional effect is sharp if and only if it is extremal.
\end{prop}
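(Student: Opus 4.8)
The plan is to prove both implications. One direction is immediate from the already-cited result that every extremal element is sharp: if a one-dimensional effect $f$ is extremal, then by \cite[Lemma 4.4]{GudderPulmannovaBugajskiBeltrametti-sharpEffects} it is sharp. So the content is in the other direction: assuming $f \in S_1(E)$ is sharp and one-dimensional, I must show it is extremal.

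First I would use the fact recorded just above the proposition: since $f$ is sharp there is a state $s_1 \in \states(E)$ with $s_1(f) = 1$. Now suppose $f = \lambda g_1 + (1-\lambda) g_2$ with $\lambda \in (0,1)$ and $g_1, g_2 \in E$. Applying $s_1$ gives $1 = \lambda s_1(g_1) + (1-\lambda) s_1(g_2)$ with both $s_1(g_i) \in [0,1]$, forcing $s_1(g_1) = s_1(g_2) = 1$. Next I would exploit one-dimensionality: since $g_1, g_2 \le f$ (because $\lambda g_i \le f$ and, using the convex structure, $g_i \le f$ — here I need $\lambda g_i \le f$ together with (C1)/(C2) to conclude $g_i$ itself is below $f$; concretely $f - \lambda g_i = (1-\lambda)g_2 \ge 0$ and similarly, and one checks $g_i \le f$ because any $h$ with $\lambda h \le f$ satisfies $h \le f$ by scaling), one-dimensionality of $f$ yields $g_i = \mu_i f$ for some $\mu_i \in [0,1]$. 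Then $s_1(g_i) = \mu_i s_1(f) = \mu_i = 1$, so $g_1 = g_2 = f$, which is exactly extremality.

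The step I expect to be the main obstacle is the claim that $g_1, g_2 \le f$ — i.e. extracting, from the convex combination $f = \lambda g_1 + (1-\lambda)g_2$ in the effect algebra, that each $g_i$ lies in the order interval $[0,f]$ so that one-dimensionality applies. In the linear-effect-algebra picture this is transparent ($f - \lambda g_i \ge 0$ gives $\lambda g_i \le f$, hence $g_i \le \lambda^{-1} f$, but I actually need $g_i \le f$, which requires a little more care). The cleanest route is to work in the ambient ordered vector space $(V,P)$, where $f = \lambda g_1 + (1-\lambda) g_2$ is an ordinary vector identity; then I claim $g_1 \le f$: indeed $f - g_1 = -\lambda g_1 + (1-\lambda)g_2 \cdot$ — this is not obviously positive, so instead I would argue via sharpness of $f$. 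Specifically, write $g_1 = (g_1 \wedge f\text{-part}) + \text{remainder}$; better, note $g_1 \le 1$ and consider whether $g_1$ could have a component "outside" $f$. If $1 - f \ge h \ge 0$ for the part of $g_1$ exceeding $f$, sharpness of $f$ kills it. Making this precise — perhaps by first showing $f$ sharp and one-dimensional forces $[0,f] = \{\mu f : \mu \in [0,1]\}$ and then that $f$ must be extremal among all decompositions, not just those with summands below $f$ — is where the real work lies, and I would handle it by combining the existence of the state $s_1$ with $s_1(f)=1$ (which already pins down $s_1(g_i)=1$) and the dominating properties of sharp effects, reducing everything to the one-dimensional scaling relation.
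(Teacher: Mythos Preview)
Your argument has the right ingredients but you assemble them in the wrong order, and the step you flag as ``the main obstacle'' --- proving $g_i \le f$ --- is both unnecessary and essentially circular. You do \emph{not} need $g_i \le f$ to invoke one-dimensionality. What you already have is $\lambda g_i \le f$ (immediate from $f-\lambda g_1=(1-\lambda)g_2\ge 0$), and $\lambda g_i \in E$ by convexity. One-dimensionality applied to the element $\lambda g_i$ gives $\lambda g_i = \alpha_i f$ for some $\alpha_i\in[0,1]$, hence $g_i = \mu_i f$ with $\mu_i = \alpha_i/\lambda$, which a priori may exceed $1$. Now your state argument finishes at once: from $1=s_1(f)=\lambda s_1(g_1)+(1-\lambda)s_1(g_2)$ you already derived $s_1(g_i)=1$, and $s_1(g_i)=\mu_i s_1(f)=\mu_i$, so $\mu_i=1$ and $g_i=f$. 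No need to show $g_i\le f$ first, and no need for the vague ``part of $g_1$ exceeding $f$'' discussion.

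With that correction your proof is complete, and it is genuinely different from the paper's. The paper also starts from $\lambda g_i\le f$ and one-dimensionality to get $g_i=\mu_i f$, but then does \emph{not} use a state; instead it shows directly from sharpness that $\mu f\in E$ with $\mu\ge 1$ forces $\mu=1$: writing $\eta=\min\{\tfrac1\mu,1-\tfrac1\mu\}$ one checks $f\ge \eta f$ and $1-f=\tfrac1\mu(1-\mu f)+(1-\tfrac1\mu)1\ge \eta f$, so sharpness gives $\eta=0$, i.e.\ $\mu=1$. Your route via the state $s_1$ with $s_1(f)=1$ is shorter and more transparent; the paper's route has the mild advantage of giving the standalone fact that a sharp one-dimensional $f$ admits no multiple $\mu f\in E$ with $\mu>1$.
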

\begin{proof}
As noted above, every extremal element is sharp. Conversely,
let $f\in S_1(E)$ and assume that we have $f = \lambda g_1 + (1-\lambda) g_2$ for some $g_1, g_2 \in E$ and $\lambda \in (0, 1)$. It follows that we have $f \geq \lambda g_1$ and $f \geq (1-\lambda) g_2$ and since $f$ is one-dimensional we must have $g_1 = \mu_1 f$ and $g_2 = \mu_2 f$, i.e. $f$ must be a convex combination of its multiples. Let now $\mu \ge  1$ be such that  $\mu f \in E$.
 Let $\eta = \min\{ \frac{1}{\mu}, 1-\frac{1}{\mu}\}$ then we have \begin{align*}
f &\geq \eta f \\
1-f = \dfrac{1}{\mu} (1 - \mu f) + (1-\dfrac{1}{\mu}) 1 \geq \eta 1 &\geq \eta f.
\end{align*}
Since $f$ is sharp, this implies that  $\eta=0$ and hence $\mu=1$. It follows that we must have $g_1=g_2=f$ and $f$ is extremal.
\end{proof}

\begin{exm}[Classical theory] \label{exm:effAlg-clasTheory}
Let $V_n$ be a $n$-dimensional real vector space and let $v \in V_n$ be given as
\begin{equation*}
v =
\begin{pmatrix}
v^1 \\
\vdots \\
v^n
\end{pmatrix}
\end{equation*}
where $v_i \in \mathbb{R}$ are the coordinates of the vector. The positive orthant is the cone
\begin{equation*}
P_n = \{ v \in V_n: v^i \geq 0, \forall i \in \{1, \ldots, n\} \},
\end{equation*}
one can show that $P_n$ is a pointed convex cone. Let
\begin{equation*}
u_n = \begin{pmatrix}
1 \\
\vdots \\
1
\end{pmatrix}
\end{equation*}
be the vector with all coordinates equal to $1$, then $[0, u_n] = E(S_n)$ is an effect algebra and we call it the classical effect algebra. Now we will find the state space of $E(S_n)$. Note that since the cone $P_n$ is also generating, i.e. $V_n = P_n - P_n$, every morphism $s: E(S_n) \to [0, 1]$ can be extended to a linear functional on $V_n$, i.e. we have $\states(E(S_n)) \subset V_n^*$ where $V_n^*$ is the dual vector space. Moreover note that $V_n^*$ is again $n$-dimensional vector space hence we can use the same coordinate representation for $w \in V_n^*$ as for $v \in V_n$ and we have $w(v) = \sum_{i=1}^n w^i v^i$.

Let $s_1, \ldots, s_n$ denote the standard basis of $V$, i.e. we have $s^i_j = \delta_{ij}$ where $\delta_{ij}$ is the Kronecker delta, then one can easily see that $\states(E(S_n)) = \conv(\{ s_1, \ldots, s_n\}) = S_n$ where $S_n$ denotes the standard simplex with $n$ vertexes.
\end{exm}

\begin{exm}[Quantum theory]
The most well-know example of an effect algebra comes from quantum theory. Let $\Ha$ be a finite-dimensional Hilbert space and let $B_h(\Ha)$ be the real vector space of self-adjoint operators on $\Ha$. There is a natural positive cone of positive semi-definite operators $B_h(\Ha)^+ = \{ A \in B_h(\Ha): A \geq 0 \}$ and a natural selection of unit effect to be the identity operator $\I$. But actually any positive invertible operator would generate some convex effect algebra in a similar fashion. The interval $\effects(\Ha) = [0, \I] = \{ E \in B_h(\Ha) : 0 \leq E \leq \I \}$ is a convex effect algebra and the corresponding operators are called effects. They are exactly the operators we use to form POVMs which represent measurements in quantum theory \cite{HeinosaariZiman-MLQT}. One can see that in $\effects(\Ha)$ the sharp elements are projections and sharp one-dimensional elements are projections on one dimensional subspaces of $\Ha$ (also called rank-1 projections). It is rather straightforward to find the state space of $\effects(\Ha)$ as we have $\states(\effects(\Ha)) = \dens_\Ha = \{ \rho \in B_h(\Ha): \rho \geq 0, \Tr(\rho) = 1 \}$, i.e. it is the set of density matrices.
\end{exm}

\begin{exm}[Square state space]
Let $V$ be a real finite dimensional vector space with $\dim(V) = 3$ and let us denote
\begin{align*}
&f = \begin{pmatrix}
1 \\
0 \\
0 \\
\end{pmatrix},
&g = \begin{pmatrix}
0 \\
1 \\
0 \\
\end{pmatrix},
&&u = \begin{pmatrix}
0 \\
0 \\
1 \\
\end{pmatrix}.
\end{align*}
Let $P$ be the convex cone generated by the vectors $f, g, u-f, u-g$, one can show that this cone is pointed. Let $[0, u] = E(S)$ be an effect algebra. The state space is given as
\begin{equation*}
\states(E(S)) = \{ v \in V: v^1, v^2 \in [0, 1], v^3 = 1 \}
\end{equation*}
and it is affinely isomorphic to a square $S = [0, 1] \times [0, 1]$.

Let $h = \frac{1}{2}( f + g)$, then clearly $h$ is not an extreme point of $E(S)$, we will show that it is sharp hence showing that in Prop. \ref{prop:effAlg-sharp1D} one has to require that the effect in question is one-dimensional. Let $s_{00}, s_{10}, s_{01}, s_{11} \in \states(E(S))$ be the extreme points given as
\begin{align*}
&s_{00} =
\begin{pmatrix}
0 \\
0 \\
1
\end{pmatrix}
&s_{10} =
\begin{pmatrix}
1 \\
0 \\
1
\end{pmatrix}
&&s_{01} =
\begin{pmatrix}
0 \\
1 \\
1
\end{pmatrix}
&&&s_{11} =
\begin{pmatrix}
1 \\
1 \\
1
\end{pmatrix}
\end{align*}
and let $h \geq p$ and $u-h \geq p$. Remember that the states are morphisms, i.e. for $s \in \states(E(S))$ we must have $s(h) \geq s(p)$ and $1-s(h) = s(u-h) \geq s(p)$. From $s_{00}(h) = 0$ we get $s_{00}(p) = 0$ and from $s_{11}(h) = 1$ we get $s_{11}(p) = 0$. Moreover note that
\begin{equation*}
s_{00} + s_{11} = s_{10} + s_{01}
\end{equation*}
so we must have
\begin{equation*}
s_{10} (p) = (s_{00} + s_{11})(p) - s_{01}(p) = - s_{01}(p).
\end{equation*}
From $s_{10} (p) \geq 0$ and $s_{01} (p) \geq 0$ it follows that $s_{10} (p) = s_{01} (p) = 0$ so $p$ is zero on all extreme points. In this case it follows that $p = 0$ as it is zero on all states.
\end{exm}

\begin{exm}[General probabilistic theories]
We have already seen that every effect algebra gives rise to a state space (even though in some cases it can be an empty set). One can reverse the logic and ask: can we use a state space to generate an effect algebra? The answer is yes and this approach is often termed general probabilistic theories.

Let $V$ be a finite-dimensional real vector space with the standard Euclidean metric and let $K \subset V$ be a compact convex set. Let $f: K \to \mathbb{R}$ be a function, then we say that $f$ is affine if for all $x,y \in K$ and $\lambda \in [0,1]$ we have
\begin{equation*}
f(\lambda x + (1-\lambda)y) = \lambda f(x) + (1-\lambda) f(y).
\end{equation*}
Let $E(K)$ be the set of affine functions $f: K \to \mathbb{R}$ such that for all $x \in K$ we have $f(x) \in [0,1]$, then one can see that $E(K)$ is an effect algebra with the partially defined operation of addition and the unit effect $1(x) = 1$. In this case one can show that $\states(E(K)) = K$ up to an affine isomorphism \cite[Theorem 4.3]{AsimowEllis} and this duality often plays an important role in many results. Also note that
classical theory is a special case of a general probabilistic theory with $K = S_n$, quantum theory is a special case of a general probabilistic theory with $K = \dens_\Ha$ and the square state space theory is a special case of a general probabilistic theory with $K = S$.
\end{exm}

\section{Contexts and spectral effect algebras} \label{sec:properties}
In this section we are going to introduce contexts and spectral effect algebras and provide some results on their structure.

\begin{definition}
A context in a convex effect algebra $E$ is a finite collection $A = \{a_1, \ldots, a_n\} \subset S_1(E)$ such that $\sum_{i=1}^n a_i = 1$.
\end{definition}

\begin{definition}
Let $A= \{ a_1, \ldots, a_n\}$ be a context in a convex effect algebra $E$ then we denote
\begin{equation*}
\hat{a}_i = \{ s_i \in \states(E): s_i (a_i) = 1 \}.
\end{equation*}
\end{definition}
By the remarks below Definition \ref{def:sharpex}, the sets $\hat{a}_i$ are nonempty and it is easy to see that $s_i (a_j) = \delta_{ij}$ for all $s_i\in \hat{a}_i$ and $i,j=1,\dots,n$, where $\delta_{ij}$ is the Kronecker delta.

Let us  denote by $\bar{A}$ the convex effect subalgebra generated by $A$, that is
\[
\bar{A}=\left\lbrace \sum_i\mu_ia_i,\ \mu_i\in [0,1] \right\rbrace .
\]
Then $\bar{A}$ is the interval $[0,u_A]$ in the ordered vector space $V(A):=\{\sum_it_ia_i,\ t_i\in \mathbb R\}$, with an obvious positive cone and order unit $u_A=\sum_i a_i$. 

\begin{definition}
We say that a convex effect algebra $E$ is spectral if for every $f \in E$ there is a context $A \subset E$ such that $f \in \bar{A}$. Specifically, any $f\in E$ has the form 
\begin{equation}\label{eq:spectral}
f=\sum_i \mu_i a_i
\end{equation}
for some $\mu_i\in [0,1]$ and some context $A=\{a_1,\dots,a_n\}$. Any expression of the form \eqref{eq:spectral} will be called a spectral decomposition of $f$.
\end{definition}

\begin{exm}[Classical theory]
In \cite[Theorem 3.1]{Gudder-contexts} it was shown that the classical effect algebras are exactly those effect algebras that contains only one context. Using the notation from Exm. \ref{exm:effAlg-clasTheory} let $a_1, \ldots, a_n \in E(S_n)$ be given as $a^i_j = \delta_{ij}$. It is easy to show that $a_1, \ldots, a_n$ are one-dimensional and extreme effect and we have $\sum_{i=1}^n e_i = u$, so $A = \{a_1, \ldots a_n \}$ is a context. Moreover note that $\bar{A} = E(S_n)$ so $E(S_n)$ is spectral.
\end{exm}

\begin{exm}[Circle state space]
Let $V = B_h(\mathbb{R}^2)$ be the vector space of $2 \times 2$ real symmetric matrices, let $P = B^+(\mathbb{R}^2)$ be the cone of symmetric positive semidefinite matrices and let $\I$ denote the identity matrix. We already know that $\effects(\mathbb{R}^2) = [0, \I]$ is an effect algebra. The state space of this effect algebra is affinely isomorphic to a circle.

It is easy to see that the sharp one-dimensional elements of $\effects(\mathbb{R}^2)$ are the rank-1 projection, i.e. the projections on the 1 dimensional subspaces of $\mathbb{R}^2$. It follows that every context is of the form $\{P, \I-P\}$; we are going to denote $\I - P = P^\perp$ to keep the equations short. Let $X \in \effects(\mathbb{R}^2)$ and let $P$ be a projection on one of the eigenvectors of $X$ then we have $X = \mu_1 P + \mu_2 (\I-P)$ where $\mu_i \in [0,1]$ for $i \in \{1, 2\}$ are the corresponding eigenvalues of $X$. This shows that $\effects(\mathbb{R}^2)$ is spectral.
\end{exm}

\begin{exm}[Quantum theory]
In quantum theory the contexts are the sets of the rank-1 projectors $P_i$, $i = 1, \ldots, n$ such that $\sum_{i=1}^n P_i = \I$. The effect algebra $\effects(\Ha)$ is spectral; this statement is just another formulation of the spectral theorem.
\end{exm}

Note that it is not assumed that the  number of elements is the same in each context or that the space $V$ generated by $E$ is finite dimensional, although it is the case in the prototypic examples.  These are the algebras of finite dimensional classical and quantum effects, that were characterized in \cite{Gudder-contexts}. It is also easy to see that the algebra of effects over a finite dimensional real Hilbert space, or more generally of effects in an Euclidean Jordan algebra, is spectral and has the above properties. 

Other examples are obtained from finite dimensional spectral convex sets introduced in  \cite{AlfsenSchultz-stateSpaces}. It can be seen from  \cite[Theorem 8.64]{AlfsenSchultz-stateSpaces} that the algebra of effects (that is, affine functions into $[0,1]$) over a spectral convex set is a spectral effect algebra. In particular, \cite[Theorem 8.87 and Fig. 8.1]{AlfsenSchultz-stateSpaces} provides a set of more exotic examples that are not of the obvious kind listed in the previous paragraph.

We will discuss some properties of spectral effect algebras and their state spaces. The following will be a useful tool.

\begin{lemma} \label{lemma:properties-minMax}
Let $f = \sum_{i=1}^n \mu_i a_i$ be a spectral decomposition of $f$. Then
\begin{align*}
\max_{s \in \states(E)} s(f) &= \max\{ \mu_1, \ldots, \mu_n \}, \\
\min_{s \in \states(E)} s(f) &= \min\{ \mu_1, \ldots, \mu_n \}.
\end{align*}
\end{lemma}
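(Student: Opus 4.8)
The plan is to use the evaluation of the states $s_i \in \hat a_i$ associated to the context $A = \{a_1,\dots,a_n\}$, together with the basic fact that every state is affine. First I would recall that by the remarks below Definition~\ref{def:sharpex} each set $\hat a_i$ is nonempty, so I may pick $s_i \in \hat a_i$ for each $i$; then $s_i(a_j) = \delta_{ij}$, and hence $s_i(f) = \sum_j \mu_j s_i(a_j) = \mu_i$. This immediately gives the lower bound $\max_{s} s(f) \ge \max_i \mu_i$ and the upper bound $\min_s s(f) \le \min_i \mu_i$, since each $\mu_i$ is itself attained as $s_i(f)$.

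For the reverse inequalities, the key observation is that for any state $s \in \states(E)$ the numbers $s(a_1),\dots,s(a_n)$ are nonnegative and sum to $s(u_A) = s(1) = 1$, because $\sum_i a_i = 1$ and $s$ is a morphism; so $(s(a_1),\dots,s(a_n))$ is a probability vector. Then $s(f) = \sum_i \mu_i\, s(a_i)$ is a convex combination of $\mu_1,\dots,\mu_n$, whence $\min_i \mu_i \le s(f) \le \max_i \mu_i$ for every $s$. Combining this with the attainment statement from the previous paragraph yields both equalities.

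I do not expect any serious obstacle here: the only subtlety is making sure that the existence of the ``eigenstates'' $s_i$ is genuinely available, which it is — it is precisely the content of the remark (citing \cite[Lemma 4.4]{GudderPulmannovaBugajskiBeltrametti-sharpEffects}) that for each sharp effect $a_i$ there is a state with $s(a_i) = 1$, and one then checks $s_i(a_j) = \delta_{ij}$ using $a_i + \sum_{j\ne i} a_j = 1$ and positivity. The affineness of states on a convex effect algebra, needed to write $s(f) = \sum_i \mu_i s(a_i)$, is also already quoted in the excerpt (from \cite{Gudder-effectAlgebras}), so the whole argument is a short two-line computation plus the convex-combination bound.
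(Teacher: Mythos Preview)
Your proposal is correct and follows essentially the same approach as the paper: bound $s(f)=\sum_i \mu_i s(a_i)$ using that $(s(a_1),\dots,s(a_n))$ is a probability vector, and attain the extrema at some $s\in\hat a_{i'}$ for the index $i'$ realizing the maximum (resp.\ minimum) $\mu_i$. The only cosmetic difference is the order of presentation; the paper writes the upper bound as $\max_i\mu_i\cdot\sum_i s(a_i)$ rather than phrasing it as a convex combination, but this is the same computation.
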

\begin{proof}
Let $s \in \states(E)$, then we have
\begin{equation*}
s (f) = \sum_{i=1}^n \mu_i s(a_i)
\leq \max\{ \mu_1, \ldots, \mu_n \} \sum_{i=1}^n s(a_i)
= \max\{ \mu_1, \ldots, \mu_n \}.
\end{equation*}
To show that the bound is tight let $\max\{ \mu_1, \ldots, \mu_n \} = \mu_{i'}$ for some $i' \in \{ 1, \ldots, n \}$ and let $s\in \hat{a}_{i'}$.  Then we have
\begin{equation*}
s(f) = \mu_{i'}=\max\{ \mu_1, \ldots, \mu_n \}.
\end{equation*}
The proof for the minimum is analogical.
\end{proof}

We next show that  the elements of the embedding ordered vector space $V$ have spectral decompositions as well.

\begin{lemma}\label{lemma:decV}
Let $E$ be spectral and let $(V,P)$ be the ordered vector space with an order unit $u\in P$ such that $E=[0,u]$. Then for any $v\in V$ there is a context $A$  such  that $v\in V(A)$.
\end{lemma}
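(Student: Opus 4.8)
The plan is to deduce a decomposition of a general $v\in V$ from the spectral decomposition of a single rescaled element of $E$. The key observation is that $u$ is an order unit of $(V,P)$ (as recorded above, since $[0,u]$ generates $V$): every $v$ therefore sits inside some dilation of the interval $[0,u]$, and after shrinking back into $[0,u]$ we may apply spectrality of $E$ and transport the resulting context back to $v$.

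Concretely, I would first fix $v\in V$ and choose $\lambda>0$ with $-\lambda u\le v\le \lambda u$; such a $\lambda$ exists precisely because $u$ is an order unit. Then $0\le v+\lambda u\le 2\lambda u$, so $w:=\tfrac{1}{2\lambda}(v+\lambda u)$ satisfies $0\le w\le u$, i.e. $w\in[0,u]=E$. Since $E$ is spectral, $w$ has a spectral decomposition $w=\sum_{i=1}^n\mu_i a_i$ for some context $A=\{a_1,\dots,a_n\}$ and $\mu_i\in[0,1]$. Because $A$ is a context, $u_A=\sum_{i=1}^n a_i=1=u$, so $u\in V(A)$, and $w\in V(A)$ by construction. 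As $V(A)$ is a linear subspace of $V$, we conclude
\[
v=2\lambda w-\lambda u=\sum_{i=1}^n(2\lambda\mu_i-\lambda)\,a_i\in V(A),
\]
which is the assertion. (If $v=0$ one may simply take $A$ to be any context, which exists since $1\in\bar A$ for some context $A$.)

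I do not expect a genuine obstacle here; the argument is essentially bookkeeping. The two points that deserve a moment's attention are that the order-unit assumption on $u$ is exactly what makes the rescaled element $w$ land in $E$ (so the standing hypothesis that $[0,u]$ generates $V$ is used at this step), and that the unit $1$ of the effect algebra $E=[0,u]$ coincides with $u$, which is why the context sum $u_A=\sum_i a_i$ equals $u$, so that $u$ — and with it $v$ — lies in the span $V(A)$.
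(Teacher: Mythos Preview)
Your proof is correct and essentially identical to the paper's own argument: both use the order-unit property to find $\lambda>0$ with $-\lambda u\le v\le \lambda u$, form the element $\tfrac{1}{2\lambda}(v+\lambda u)\in E$, apply spectrality to get a context $A$, and then solve back for $v$ as the combination $\sum_i\lambda(2\mu_i-1)a_i\in V(A)$. The only additions in your write-up are the (harmless) explicit remarks that $u=u_A\in V(A)$ and the trivial $v=0$ case.
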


\begin{proof}
Since $u$ is a order unit, we have $-\lambda u\le v\le \lambda u$ for some $\lambda> 0$. Then $0\le v+\lambda u\le 2\lambda u$, so that 
$a:=\frac1{2\lambda}v +\frac12 u\in E$.  Let $a=\sum_i \nu_i a_i$ for some context $A$, then $v=\lambda(2a-u)=\sum_i\lambda(2\nu_i-1)a_i$.
\end{proof}

\begin{prop}\label{prop:archimedean}
Any spectral effect algebra $E$ has an order determining set of states, that is,
$s(a)\le s(b)$ for all $s\in \states(E)$ implies that $a\le b$.
\end{prop}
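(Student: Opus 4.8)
The plan is to argue directly inside the embedding ordered vector space $(V,P)$ with order unit $u$ such that $E=[0,u]$, and to show that the hypothesis $s(a)\le s(b)$ for all $s\in\states(E)$ forces $b-a\in P$, i.e. $a\le b$. The key idea is to use spectrality not on $a$ and $b$ separately but on the single difference vector. So first I would set $v:=b-a\in V$ and invoke Lemma \ref{lemma:decV} to obtain a context $A=\{a_1,\dots,a_n\}$ with $v\in V(A)$, that is,
\begin{equation*}
v=\sum_{i=1}^n t_i a_i
\end{equation*}
for some $t_i\in\mathbb R$. Note that the $a_i$ need not be linearly independent, so this expansion need not be unique; this will not cause any trouble, since the coefficients will turn out to be forced by the states associated with the context.

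The second step is to read off the $t_i$. For each $i$ choose $s_i\in\hat a_i$, which is nonempty by the remarks following Definition \ref{def:sharpex}, and recall from the same remarks that $s_i(a_j)=\delta_{ij}$. Extending $s_i$ to a linear functional on $V$ (legitimate because $u$ is an order unit, so $V=P-P$, and this extension is exactly the manipulation already used in the excerpt, e.g. in Lemma \ref{lemma:properties-minMax} and in the square state space example), we get
\begin{equation*}
t_i=\sum_{j=1}^n t_j\, s_i(a_j)=s_i(v)=s_i(b)-s_i(a)\ge 0,
\end{equation*}
the last inequality being precisely the assumption. Hence every coefficient in the above expansion of $v$ is nonnegative.

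Finally, since $t_i\ge 0$ and $a_i\in S_1(E)\subseteq E\subseteq P$ for each $i$, and $P$ is a convex cone, we conclude $v=\sum_i t_i a_i\in P$, i.e. $a\le b$, which is the assertion. I do not expect a genuine obstacle here: the only slightly delicate point is the passage from the morphism property of states to honest linearity on $V$, and that is standard and already tacitly used in the paper. The real content is the first step — spectrality, through Lemma \ref{lemma:decV}, lets one fit $b-a$ into a finite ``diagonal'' subspace $V(A)$ on which positivity is detected coordinatewise by the distinguished states $s_i\in\hat a_i$.
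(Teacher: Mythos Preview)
Your proof is correct and follows essentially the same route as the paper: apply Lemma~\ref{lemma:decV} to write $b-a=\sum_i t_i a_i$ in $V(A)$ for some context $A$, then use $s_i\in\hat a_i$ to read off $t_i=s_i(b)-s_i(a)\ge 0$ and conclude $b-a\in P$. The paper's version is just a terser rendering of exactly this argument.
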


\begin{proof}
Let $a,b\in E$ be such that $s(a)\le s(b)$ for all states $s$. By Lemma \ref{lemma:decV}, there is some context $A$ and real numbers $\mu_i$ such that 
 $v=b-a=\sum_i\mu_ia_i$. By the assumption, $\mu_i =s_i(v)\ge 0$ for $s_i\in \hat{a}_i$, so that $b-a\ge 0$ and $a\le b$. 
\end{proof}

Let us remark that it was proved in  \cite[Theorem 3.6]{GudderPulmannovaBugajskiBeltrametti-sharpEffects} that a convex effect algebra 
 $E$ has an order determining set of states if and only if it is Archimedean. In this case, the order unit seminorm 
 \[
\|v\|:=\inf\{\lambda>0,\ -\lambda u\le v\le \lambda u\}=\sup_{s\in \states(E)} |s(v)|
 \]
is a norm. Using Lemma \ref{lemma:properties-minMax}, we obtain for $f\in E$ with spectral decomposition  $f=\sum_i \mu_i a_i$ that
\begin{align}
&\Vert f \Vert=\max\{\mu_1,\ldots,\mu_n\},
&\Vert 1-f \Vert=1-\min\{\mu_1,\ldots,\mu_n\}. \label{eq:norm}
\end{align}
More generally, if $v=\sum_i \alpha_i a_i$ is a spectral decomposition, then we have for $s_i\in \hat{a}_i$
\[
\|v\|=\sup_{s\in \states(E)} |s(v)|=\max_i|\alpha_i|=:|\alpha_{i_{max}}|=|s_{i_{max}}(v)|.
\]

An element $f\in E$   does not have to have a  unique spectral decomposition. 
The following result is inspired by \cite{vandeWetering-filters} and is immediate from \eqref{eq:norm}.
\begin{prop} \label{prop:properties-decompositionMinMax}
Let $f \in E$ have two spectral decompositions 
\begin{equation*}
\sum_{i=1}^n \mu_i a_i = f = \sum_{j=1}^m \nu_j b_j
\end{equation*}
Then  $\max \{ \mu_1, \ldots, \mu_n \} = \max \{ \nu_1, \ldots, \nu_m \}$ and $\min \{ \mu_1, \ldots, \mu_n \} = \min \{ \nu_1, \ldots, \nu_m \}$.
\end{prop}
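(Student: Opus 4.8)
The plan is to derive this statement directly from the norm formulas in \eqref{eq:norm}, which were themselves obtained from Lemma \ref{lemma:properties-minMax} together with the fact that a spectral effect algebra is Archimedean (Proposition \ref{prop:archimedean}), so that the order unit seminorm is in fact a norm. First I would apply the first equality in \eqref{eq:norm} to each of the two spectral decompositions of $f$: since $\Vert f\Vert$ depends only on $f$ and not on the decomposition, we get $\max\{\mu_1,\ldots,\mu_n\}=\Vert f\Vert=\max\{\nu_1,\ldots,\nu_m\}$, which is the first claim.

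For the second claim I would proceed analogously using the companion formula $\Vert 1-f\Vert=1-\min\{\mu_1,\ldots,\mu_n\}$. Applying this to both decompositions gives $1-\min\{\mu_1,\ldots,\mu_n\}=\Vert 1-f\Vert=1-\min\{\nu_1,\ldots,\nu_m\}$, and cancelling the $1$'s yields $\min\{\mu_1,\ldots,\mu_n\}=\min\{\nu_1,\ldots,\nu_m\}$. Alternatively one could avoid invoking the formula for $\Vert 1-f\Vert$ and instead note directly from Lemma \ref{lemma:properties-minMax} that $\min_{s\in\states(E)}s(f)$ is an intrinsic quantity equal to $\min\{\mu_1,\ldots,\mu_n\}$ for any spectral decomposition; either route works.

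There is essentially no obstacle here: the content has already been front-loaded into Lemma \ref{lemma:properties-minMax} and equation \eqref{eq:norm}, and the proposition is just the observation that the left-hand sides of those identities are functions of $f$ alone. The only thing worth stating carefully is that we are using the existence of an order determining set of states (Proposition \ref{prop:archimedean}) to guarantee that $\max_{s}s(f)$ and $\min_{s}s(f)$ are genuinely attained and genuinely characterize the relevant extrema, but since both spectral decompositions are of the same element $f$ this is automatic. So the proof is a two-line deduction from the preceding lemma.

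Concretely, the proof I would write is: By Lemma \ref{lemma:properties-minMax} applied to the decomposition $f=\sum_{i=1}^n\mu_ia_i$ we have $\max_{s\in\states(E)}s(f)=\max\{\mu_1,\ldots,\mu_n\}$ and $\min_{s\in\states(E)}s(f)=\min\{\mu_1,\ldots,\mu_n\}$, and applied to $f=\sum_{j=1}^m\nu_jb_j$ we have $\max_{s\in\states(E)}s(f)=\max\{\nu_1,\ldots,\nu_m\}$ and $\min_{s\in\states(E)}s(f)=\min\{\nu_1,\ldots,\nu_m\}$. Since the left-hand sides depend only on $f$, comparing the two gives $\max\{\mu_1,\ldots,\mu_n\}=\max\{\nu_1,\ldots,\nu_m\}$ and $\min\{\mu_1,\ldots,\mu_n\}=\min\{\nu_1,\ldots,\nu_m\}$, as claimed. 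This completes the proof.
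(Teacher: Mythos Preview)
Your proof is correct and matches the paper's approach exactly: the paper states that the result ``is immediate from \eqref{eq:norm}'', which is precisely the argument you give. Your alternative route via Lemma~\ref{lemma:properties-minMax} directly is equally valid and arguably cleaner, since it avoids the detour through the order unit norm.
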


Up to now it is not clear whether the sets $\hat{a}_i$ consist of a single state. Clearly, these sets are faces of $\states(E)$. More generally, for any element  $f\in E$, the set $\hat{f}:=\{s\in \states(E), s(f)=1\}$ is a face of $\states(E)$ (note that this also may be empty).
\begin{definition}
Let $\mathcal{F} \subset \states(E)$ be a face, then we say that $\mathcal{F}$ is an $E$-exposed face if there is $f \in E$ such that $\mathcal{F} = \hat{f} = \{s\in \states(E), s(f)=1\}$. If $\{s\} = \mathcal{F} =\hat{f}$, we say that $s$ is and $E$-exposed point of $\states(E)$, in this case, we write $\hat{f}=s$.
\end{definition}
If $a\in S_1(E)$, $\hat{a}$ is a  nonempty $E$-exposed proper face of $\states(E)$. 
Note that if   $A = \{ a_1, \ldots, a_n\}$ is a context,  the faces $\hat{a}_i$ are affinely independent. 
We next show a property of the $E$-exposed points of $\states(E)$.

\begin{prop} \label{prop:properties-exposedStates}
Every $E$-exposed point of $\states(E)$ has the form $\hat{a}$ for some $a\in S_1(E)$.
\end{prop}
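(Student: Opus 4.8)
The plan is to start from an $E$-exposed point $s = \hat{f}$, where $f \in E$ satisfies $\{s\} = \{t \in \states(E) : t(f) = 1\}$, and to use spectrality to replace $f$ by a sharp one-dimensional effect with the same exposed face. First I would take a spectral decomposition $f = \sum_{i=1}^n \mu_i a_i$ with context $A = \{a_1, \dots, a_n\}$. By Lemma~\ref{lemma:properties-minMax}, $\max_t t(f) = \max_i \mu_i$; since $s(f) = 1$ is attained, we must have $\max_i \mu_i = 1$. Let $I = \{i : \mu_i = 1\}$, which is nonempty. For any $t \in \states(E)$, $t(f) = \sum_i \mu_i t(a_i) = 1$ forces $\sum_{i \notin I}(1-\mu_i)t(a_i) = 0$, hence $t(a_i) = 0$ for all $i \notin I$ with $\mu_i < 1$, and therefore $\sum_{i \in I} t(a_i) = 1$. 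Conversely any state with $\sum_{i\in I} t(a_i) = 1$ satisfies $t(f) = 1$. Thus $\hat f = \widehat{\,\sum_{i\in I} a_i\,}$, so we may as well assume from the start that $f = \sum_{i \in I} a_i$ is itself a sum of a subfamily of a context, with $\hat f = \{s\}$ still a single point.

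Next I would argue that $I$ must be a singleton. The faces $\hat a_i$, $i \in I$, are nonempty (remarks after Definition~\ref{def:sharpex}) and, as noted after the definition of $E$-exposed face, the faces $\hat a_i$ of a context are affinely independent; in particular they are pairwise disjoint as soon as $n \ge 2$ is possible — more to the point, for $i \in I$, any $t \in \hat a_i$ satisfies $t(a_i) = 1$ and hence $t(a_j) = 0$ for $j \ne i$, so $t(f) = t(\sum_{k\in I} a_k) = 1$, giving $\hat a_i \subseteq \hat f = \{s\}$. Since each $\hat a_i$ is nonempty, $\hat a_i = \{s\}$ for every $i \in I$. If $I$ contained two distinct indices $i \ne j$, then $s(a_i) = s(a_j) = 1$, but also $s(a_i) + s(a_j) \le \sum_k s(a_k) = 1$, a contradiction. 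Hence $I = \{i_0\}$ is a singleton, $f = a_{i_0} \in S_1(E)$, and $\hat f = \hat a_{i_0} = \{s\}$. Setting $a := a_{i_0}$ gives the claim.

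I expect the main subtlety to be the bookkeeping in the first paragraph — correctly identifying that $\hat f$ depends only on the index set $I$ where $\mu_i = 1$, and checking the two inclusions that show $\hat f = \widehat{\sum_{i\in I} a_i}$ — together with the observation that once $f$ is a partial sum of a context, each individual $\hat a_i$ for $i \in I$ is squeezed inside the singleton $\hat f$ and hence equals it. The final contradiction (two indices both sent to $1$ by $s$ while their images sum to at most $1$) is then immediate from $\sum_k s(a_k) = s(u_A) = 1$. No appeal to finite-dimensionality of $V$ is needed; everything uses only that states are additive morphisms and the definition of a context.
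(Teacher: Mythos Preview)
Your proof is correct and follows essentially the same route as the paper's: take a spectral decomposition of the exposing effect $f$, observe via Lemma~\ref{lemma:properties-minMax} that the maximal coefficient equals $1$, and note that for any index $i$ with $\mu_i=1$ the nonempty face $\hat a_i$ is contained in $\hat f=\{s\}$, hence equals $\{s\}$. The paper stops there (one such $a_i$ already witnesses the claim), while you additionally establish $\hat f=\widehat{\sum_{i\in I}a_i}$ and $|I|=1$ --- both true, but not needed for the statement as phrased.
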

\begin{proof}
Let $s \in \states(E)$ be an $E$-exposed point, so that $s=\hat{f}$ for some  $f \in E$. Then $1=s(f)= \max_{s' \in \states(E)} s' (f)$ and $s$ is the unique point 
 where this maximum is attained.  We have already showed in the proof of Lemma \ref{lemma:properties-minMax} that every effect in $E$ attains its maximum at some 
$s\in \hat{a}$ with $a\in S_1(E)$. The result follows.
\end{proof}

\section{Number of contexts in a spectral effect algebra} \label{sec:number}
In this section we are going to prove the main result that answers the open question from \cite{Gudder-contexts} about the possible number of contexts in a spectral effect algebra.
\begin{prop} \label{prop:number-disjoint}
Assume that every pair   $a, a' \in S_1(E)$ is summable, i.e. $a + a' \in E$.  Then there is only one context in $E$.
\end{prop}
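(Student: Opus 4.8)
The plan is to show that if all sharp one-dimensional effects are pairwise summable, then every context coincides with a single fixed one, forcing uniqueness. The key observation is that if $a, a' \in S_1(E)$ are summable, then $a + a' \le 1$, so $a' \le 1-a$; since $a'$ is one-dimensional this will be rigid information about how elements of $S_1(E)$ relate to a fixed context. Fix a context $A = \{a_1, \dots, a_n\}$ and let $B = \{b_1, \dots, b_m\}$ be an arbitrary context. I want to prove $A = B$ (as sets).

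First I would analyze the relation between a single $b \in S_1(E)$ and the context $A$. For each $i$, summability gives $a_i + b \le 1$ unless $b$ is "aligned" with $a_i$; more precisely, consider $s_i \in \hat a_i$ and evaluate. Since $b = \sum_j b_j$-type decompositions are available, and more usefully since $b \in S_1(E)$ is extremal (Proposition \ref{prop:effAlg-sharp1D}), I would look at $1 - b \ge a_i$ whenever $a_i$ and $b$ are summable, which by hypothesis is \emph{always}. So $a_i \le 1-b$ for every $i$, hence $u_A = \sum_i a_i$ would need $n$ copies' worth of room under $1-b$, which is impossible for $n \ge 2$ unless something degenerates. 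The cleaner route: if $a + a' \in E$ for \emph{all} pairs in $S_1(E)$, apply this with $a = a'$ to get $2a \le 1$, i.e. $\tfrac12 \cdot 1 \ge a$... but $\tfrac12 \cdot 1$ need not be one-dimensional. Instead, take two distinct $a, a' \in S_1(E)$ from possibly different contexts and the same argument iterated: for a context $A = \{a_1,\dots,a_n\}$ with $n \ge 2$, $a_1 + a_2 \le 1$ and also, taking $s_1 \in \hat a_1$, $s_1(a_1 + a_2) = 1 + s_1(a_2) \le 1$ forces $s_1(a_2) = 0$; fine, that's automatic. The real leverage must come from building a \emph{new} element: since $a_1, a_2$ are summable and both sharp one-dimensional, is $a_1 + a_2$ sharp? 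If $E$ had enough structure one could iterate to show $\sum_{i} a_i$ over any subset is sharp, and then ask whether distinct contexts can coexist.

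The main technical step I expect to be the obstacle is showing that two different contexts cannot both exist under the summability hypothesis — concretely, ruling out that a sharp one-dimensional effect $b \notin A$ can satisfy $b \le 1 - a_i$ simultaneously for all $i=1,\dots,n$ when $n\ge 2$. I would argue as follows: pick $b \in S_1(E)$; it has a spectral decomposition, but being one-dimensional and extremal it should appear with a single nonzero coefficient, i.e. $b$ itself lies in some context $B$. Now using summability of every pair, for each $a_i \in A$ we get $a_i + b \le 1$. Summing over $i$: this does not directly sum (partial operation), so instead evaluate against states. Take $s_1 \in \hat a_1$; then $s_1(b) \le s_1(1-a_1) = 0$, so $s_1(b) = 0$, and likewise $s_i(b) = 0$ for all $i$ and all $s_i \in \hat a_i$. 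Then consider the element $b + \sum_i a_i$ — one can try to show by induction on summing $b$ with $a_1$, then with $a_2$, etc., that $b + u_A \le 1 + \text{something}$, contradiction since $u_A = 1$. Indeed $b + a_1 \le 1$ and I would attempt to show $(b+a_1) + a_2 \le 1$ by again testing states and using that $\hat a_2$-states vanish on $b$ and $a_1$; if the state space is order determining (Proposition \ref{prop:archimedean}, which holds since $E$ is spectral), then $b + a_1 + a_2 \le 1$ iff $s(b+a_1+a_2) \le 1$ for all $s$, and one checks this using the decomposition of $s$-values via any context. Iterating, $b + \sum_i a_i = b + 1 \le 1$, whence $b = 0$ by (E4) — contradiction. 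This shows every $b \in S_1(E)$ lies in $A$, so $S_1(E) = A$ and the context is unique; the delicate point is justifying the inductive step $b + \sum_{i \le k} a_i \le 1$, for which I would lean on order-determination together with Lemma \ref{lemma:properties-minMax} applied inside the subalgebra $\bar A$.
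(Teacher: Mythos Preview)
Your argument has a genuine gap at the inductive step. You want to show $b+a_1+\cdots+a_{k+1}\le 1$ via order-determination, i.e.\ $s(b)+\sum_{i\le k+1}s(a_i)\le 1$ for \emph{every} $s\in\states(E)$. But the only information you have extracted is $s_i(b)=0$ for the special states $s_i\in\hat a_i$; for a generic state $s$ you only know $s(b)\le 1-s(a_i)$ separately for each $i$, which yields $s(b)\le 1-\max_i s(a_i)$, strictly weaker than the required $s(b)\le 1-\sum_{i\le k+1}s(a_i)$ once $k+1\ge 2$. Nothing in Lemma~\ref{lemma:properties-minMax} or the structure of $\bar A$ closes this: $b$ need not lie in $\bar A$, and the states in $\bigcup_i\hat a_i$ do not in general determine arbitrary effects. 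So the induction does not go through, and the final step $b+1\le 1$ is unjustified.

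The fix is to reverse the roles of $b$ and the $a_i$, which is exactly the paper's argument and is already implicit in your computation ``$s_1(b)\le s_1(1-a_1)=0$''. Pick $b\in S_1(E)\setminus A$ and a single state $s\in\hat b$ (so $s(b)=1$). Summability of $b$ with each $a_i$ gives $s(a_i)\le 1-s(b)=0$, hence $s(a_i)=0$ for \emph{all} $i$. Now sum over the context elements rather than over states: $1=s(1)=s\bigl(\sum_i a_i\bigr)=\sum_i s(a_i)=0$, a contradiction. No induction, no order-determination, no appeal to spectrality is needed; the identity $\sum_i a_i=1$ does all the work once you evaluate at one well-chosen state.
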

\begin{proof}
Assume that there are two contexts $A\ne B$  in $E$. Then there is at least one effect $a$ such that $a \in A$ but $a \notin B$. Let $B = \{b_1, \ldots b_n\}$ and let $s\in \hat{a}$. For every $b_i$, $i \in \{1, \ldots, n\}$, we must have $a + b_i \in E$. 
This implies
\begin{equation*}
s(a + b_i) \leq 1
\end{equation*}
which gives $s(b_i) = 0$ for all $i \in \{1, \ldots, n\}$ which is a contradiction with $\sum_{i=1}^n b_i = 1$.
\end{proof}

\begin{prop}
Every spectral effect algebra $E$ contains either one context or uncountably many contexts.
\end{prop}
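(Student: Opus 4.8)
The plan is to prove the dichotomy by a contrapositive-flavored argument: assume $E$ has more than one context and show it must then have uncountably many. By Proposition \ref{prop:number-disjoint}, having at least two contexts forces the existence of a pair $a,b\in S_1(E)$ that is \emph{not} summable, i.e. $a+b\notin E$. The first step is to extract from such a non-summable pair a one-parameter family of one-dimensional sharp effects. The natural candidate is to look at the line segment, or more precisely at elements of the form built from $a$ and $b$, and rotate between them. Concretely, for $t\in[0,1]$ consider $c_t := t a + (1-t) b$ — but this need not be sharp or one-dimensional, so instead I would look at the spectral decompositions: since $E$ is spectral, every $c_t$ has a spectral decomposition $c_t=\sum_i \mu_i^{(t)} a_i^{(t)}$, and I want to argue that as $t$ varies the associated contexts $A^{(t)}$ vary over uncountably many distinct contexts.

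The key quantitative handle is the norm formula \eqref{eq:norm}: $\|c_t\|=\max_i\mu_i^{(t)}$ and $\|1-c_t\|=1-\min_i\mu_i^{(t)}$, together with Lemma \ref{lemma:properties-minMax} and Proposition \ref{prop:properties-decompositionMinMax}. First I would show $\|c_t\|$ is a non-constant function of $t$: at the endpoints $\|c_0\|=\|b\|=1$ and $\|c_1\|=\|a\|=1$ (since $a,b$ are one-dimensional sharp, hence there are states sending them to $1$), while at $t=1/2$, non-summability of $a,b$ should be used to show $\|c_{1/2}\|<1$ — indeed if $\|a/2+b/2\|\le 1/2$... no, rather: if some state $s$ had $s(\tfrac12 a+\tfrac12 b)=1$ then $s(a)=s(b)=1$, and using that $a+b\notin E$ one derives a contradiction via the order-determining property (Proposition \ref{prop:archimedean}) — more carefully, $a+b\notin E$ means $a+b\not\le u$, so by order-determination there is a state $s$ with $s(a)+s(b)>1$; but I actually want the opposite inequality to bound the norm, so the argument should instead be: since $a\in S_1(E)$, $\hat a$ is a proper face, and one shows the maximum of $c_t$ is attained only where $s(a)=s(b)=1$ is impossible for interior $t$, giving strict inequality on a subinterval. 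The precise bookkeeping here is where I expect the main difficulty to lie.

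Granting that $t\mapsto \|c_t\|$ (or $t\mapsto\|1-c_t\|$, or the pair of extreme eigenvalues) is non-constant on $[0,1]$, hence takes uncountably many values, I would then argue that two parameter values $t\ne t'$ with different extremal eigenvalues must have \emph{different} contexts in their spectral decompositions: if $c_t$ and $c_{t'}$ used the same context $A=\{a_1,\dots,a_n\}$, then both lie in the finite-dimensional simplex-like algebra $\bar A$, whose state space restricts to a simplex with vertices (the restrictions of) $\hat a_i$; within such a classical algebra the map $f\mapsto(\text{its eigenvalues})$ is rigid and one can track how $\|c_t\|$ varies, but the point is simply that there are only countably many contexts reachable if the eigenvalue data took only countably many values — so I would set up a map from contexts to finite tuples of reals (the multiset of eigenvalues of the fixed element family) and show it has uncountable image, forcing uncountably many contexts. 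Alternatively, and perhaps more cleanly: show directly that distinct $t$ yield $c_t$ whose \emph{unique} (up to the freedom in Proposition \ref{prop:properties-decompositionMinMax}) maximal-eigenvalue face, an $E$-exposed face of the form $\hat a$, varies; by Proposition \ref{prop:properties-exposedStates} these correspond to distinct $a\in S_1(E)$, and each such $a$ sits in some context, and distinct $a$'s that are pairwise non-summable cannot share a context (same argument as Proposition \ref{prop:number-disjoint}), yielding uncountably many contexts.

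To summarize the skeleton: (1) assume $\ge 2$ contexts; invoke Proposition \ref{prop:number-disjoint} to get a non-summable pair $a,b\in S_1(E)$; (2) form the segment $c_t=ta+(1-t)b$ and use the norm identities \eqref{eq:norm} plus Proposition \ref{prop:archimedean} to show the extremal-eigenvalue data of $c_t$ varies continuously and non-trivially with $t$, hence takes uncountably many values; (3) convert ``uncountably many eigenvalue-tuples'' into ``uncountably many distinct contexts'' by observing that a single context hosts only finitely many sharp one-dimensional effects and that elements of a fixed context's subalgebra have eigenvalue data that cannot account for an uncountable continuum coming from outside, or equivalently by extracting uncountably many pairwise non-summable one-dimensional sharp effects and applying the Proposition \ref{prop:number-disjoint} argument to each. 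The main obstacle, as noted, is step (2): making rigorous that non-summability of $a,b$ really does force the norm (equivalently the top eigenvalue) of intermediate $c_t$ to drop strictly below $1$ and to move, rather than staying pinned, as $t$ ranges over an interval — this is the crux that separates the ``one context'' world from the ``uncountably many'' world, and it is where the convex-geometric structure of $\states(E)$ must be used most carefully.
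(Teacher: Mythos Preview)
Your setup is exactly the paper's: assume at least two contexts, invoke Proposition~\ref{prop:number-disjoint} to obtain a non-summable pair $a,b\in S_1(E)$, and form the segment $c_\lambda=\lambda a+(1-\lambda)b$. From there, however, you take a long detour through norm and eigenvalue data that is both unnecessary and, as written, does not close.

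The gap is in your step (3). ``Uncountably many eigenvalue-tuples'' does \emph{not} yield ``uncountably many contexts'': a single context $A$ supports a whole continuum of elements in $\bar A$ with continuously varying coefficients, so different values of $\|c_t\|$ are perfectly compatible with all the $c_t$ living in one and the same $\bar A$. Your alternative, extracting from each $c_t$ a top-eigenvalue sharp one-dimensional effect and showing these are pairwise distinct and non-summable, is not substantiated and would itself require essentially the argument you are trying to avoid. Step (2) --- which you flag as the crux --- is in fact irrelevant; you never need $\|c_{1/2}\|<1$.

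The missing idea is much simpler and purely linear. Suppose two distinct parameters $\lambda\neq\mu$ gave $c_\lambda,c_\mu\in\bar C$ for the \emph{same} context $C=\{c_1,\dots,c_n\}$. Since $a$ and $b$ are linear combinations of $c_\lambda$ and $c_\mu$, both lie in $V(C)$, say $a=\sum_i\alpha_i c_i$. Evaluating at $s_i\in\hat c_i$ gives $\alpha_i=s_i(a)\in[0,1]$, so in fact $a\in\bar C$. But a sharp one-dimensional element of $\bar C$ must be one of the $c_i$ (each $\alpha_i c_i\le a$ forces $c_i$ to be a multiple of $a$ whenever $\alpha_i>0$), so $a\in C$; likewise $b\in C$. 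Elements of a context are summable, contradicting the choice of $a,b$. Hence the map $\lambda\mapsto C_\lambda$ is injective on $[0,1]$, and there are uncountably many contexts. No norms, no continuity, no eigenvalue tracking required.
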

\begin{proof}
Assume that $E$ contains at least two contexts. As a result of Prop. \ref{prop:number-disjoint},   there are  
$a, b \in S_1(E)$, $a\ne b$, that are not summable. Let $\lambda \in [0, 1]$ and denote
\begin{equation*}
c_\lambda = \lambda a + (1-\lambda) b.
\end{equation*}
We will show that every $c_\lambda$ must belong to a different context, hence $E$ we must contain uncountably infinite number of contexts.

Since $E$ is spectral we have that for every $\lambda \in [0, 1]$ there is a context $C_\lambda$ such that $c_\lambda \in \bar{C}_\lambda$. Assume that for some $\lambda \neq \mu$, $\mu \in [0, 1]$ we have $C_\lambda = C_\mu$. Then we have $a, b \in V(C_\lambda)$ as $c_\lambda, c_\mu \in V(C_\lambda)$ and we can express $a$ and $b$ as linear combinations of $c_\lambda$ and $c_\mu$.

Let $C_\lambda = \{ c_1, \ldots, c_n \}$, then we must have
\begin{equation*}
a = \sum_{i=1}^n \alpha_i c_i
\end{equation*}
for some $\alpha_i \in \mathbb{R}$. Moreover, let $s_i\in \hat{c}_i$, then $\alpha_i =s_i (a)  \in [0,1]$ for all $i$  implies that $a \in \bar{C}_\lambda$, which yields $a \in C_\lambda$ as $a$ is sharp and one-dimensional. In a similar fashion we get $b \in C_\lambda$, which is a contradiction with the assumption that $a$ and $b$ are not summable.
\end{proof}

\section{Compositions of spectral effect algebras} \label{sec:composition}
In the study of spectral effect algebras, it is a natural question  whether spectrality is preserved by some constructions over convex effect algebras. In this section, we study the  direct products and direct convex sums; note that these are the product and coproduct in the category of convex effect algebras.

Let $E_1$, $E_2$ be effect algebras and let $E_1 \times E_2 = \{ (f_1, f_2) : f_1 \in E_1, f_2 \in E_2 \}$. We are going to show that $E_1 \times E_2$ is a convex effect algebra. The partial binary operation $+$ is given for $f_i, f'_i \in E_i$, such that $f_i + f'_i \in E_i$, $i \in \{1, 2\}$ as $(f_1, f_2) + (f'_1, f'_2) = (f_1 + f'_1, f_2 + f'_2)$. The unit of $E_1 \times E_2$ is $(1, 1)$. If $E_1$, $E_2$ are convex effect algebras, then we can define a convex structure on $E_1 \times E_2$ by $\lambda (f_1, f_2) = (\lambda f_1, \lambda f_2)$, for $\lambda \in [0, 1]$. It follows that $E_1 \times E_2$ is a convex effect algebra.
\begin{definition}
Let $E_1$, $E_2$ be effect algebras, then their direct product is the effect algebra $E_1 \times E_2$.
\end{definition}
In the less general circumstances of the general probabilistic theories the direct product of the effect algebras can be introduced as the effect algebra corresponding to the direct convex sum of state spaces \cite[Proposition 1]{HeinosaariLeppajarviPlavala-noFreeInformation}.

\begin{exm}[Direct product of classical effect algebras]
We are going to construct the direct product $E(S_n) \times E(S_m)$. The vector space $V_{n+m}$ containing $E(S_n) \times E(S_m)$ is $(n+m)$-dimensional and the positive cone $P_{n+m}$ is again the positive orhant in $V_{n+m}$.The unit effect is the vector $u_{n+m} = (u_n, u_m)$, i.e. the vector with all coordinates equal to $1$. It follows that we have $E(S_n) \times E(S_m) = E(S_{n+m})$.
\end{exm}

\begin{prop} \label{prop:composition-directProduct}
The direct product of spectral effect algebras is a spectral effect algebra.
\end{prop}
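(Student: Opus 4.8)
The plan is to produce, for an arbitrary element of $E_1\times E_2$, a context built by placing contexts of the two factors in ``block-diagonal'' position. First I would fix $(f_1,f_2)\in E_1\times E_2$ and use spectrality of the two factors to write $f_1=\sum_{i=1}^n\mu_i a_i$ and $f_2=\sum_{j=1}^m\nu_j b_j$ for contexts $A_1=\{a_1,\dots,a_n\}\subset S_1(E_1)$, $A_2=\{b_1,\dots,b_m\}\subset S_1(E_2)$, with $\mu_i,\nu_j\in[0,1]$. Recalling that $E_1\times E_2$ is the interval $[0,(1,1)]$ inside the ordered vector space $(V_1\times V_2,P_1\times P_2)$ with the coordinatewise order, where $E_k=[0,1]$ in $(V_k,P_k)$, the candidate context is
\[
A=\{(a_1,0),\dots,(a_n,0),(0,b_1),\dots,(0,b_m)\}\subset E_1\times E_2 .
\]

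Next I would check the two requirements for $A$ to be a context. The identity $\sum_{i}(a_i,0)+\sum_{j}(0,b_j)=(1,1)$ is immediate from $\sum_i a_i=1$ in $E_1$, $\sum_j b_j=1$ in $E_2$ and the coordinatewise definition of $+$ (all partial sums land in $E_1\times E_2$ since they stay below $(1,1)$ in the cone). The substantive point is that every element of $A$ lies in $S_1(E_1\times E_2)$, for which I would isolate the small fact that $a\in S_1(E_1)$ implies $(a,0)\in S_1(E_1\times E_2)$, and symmetrically. This follows directly from the coordinatewise order together with pointedness of $P_1\times P_2$: if $(a,0)\ge(g_1,g_2)\ge 0$ then $0\ge g_2\ge 0$ forces $g_2=0$, and $a\ge g_1\ge 0$, so one-dimensionality of $a$ gives $g_1=\lambda a$ with $\lambda\in[0,1]$, hence $(g_1,g_2)=\lambda(a,0)$, while $(a,0)\ne 0$ because $a\ne 0$; if in addition $(1,1)-(a,0)=(1-a,1)\ge(g_1,g_2)$, then also $1-a\ge g_1$, so sharpness of $a$ in $E_1$ forces $g_1=0$ and thus $(g_1,g_2)=0$. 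Therefore $A$ is a context in $E_1\times E_2$.

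Finally, since $(f_1,f_2)=\sum_{i=1}^n\mu_i(a_i,0)+\sum_{j=1}^m\nu_j(0,b_j)\in\bar A$, this is a spectral decomposition of $(f_1,f_2)$; as $(f_1,f_2)$ was arbitrary, $E_1\times E_2$ is spectral, and the argument iterates to any finite direct product. I expect the only non-routine ingredient to be the lemma identifying the sharp one-dimensional elements of the direct product with the block-diagonal images of $S_1(E_1)$ and $S_1(E_2)$; everything else is bookkeeping with the coordinatewise order. The point worth stating carefully is pointedness of $P_1\times P_2$, which both collapses $0\ge g_2$ to $g_2=0$ in that lemma and makes $(1,1)$ an order unit, so that $\bar A$ is genuinely the interval $[0,u_A]$ as in the general framework set up above.
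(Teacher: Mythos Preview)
Your proposal is correct and follows essentially the same route as the paper: form the ``block-diagonal'' context from contexts in the factors, check that each $(a_i,0)$ and $(0,b_j)$ lies in $S_1(E_1\times E_2)$, and read off the spectral decomposition of $(f_1,f_2)$. The paper additionally records the converse characterization (every sharp one-dimensional element of $E_1\times E_2$ is of this block form, and hence every context is block-diagonal), but this is not needed for spectrality itself; your one-sided verification suffices. One small quibble: in your closing remarks, pointedness of $P_1\times P_2$ is what forces $g_2=0$, but it is not what makes $(1,1)$ an order unit---that follows simply because each $u_k$ is an order unit in $(V_k,P_k)$.
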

\begin{proof}
It is easy to see that an element  $(f_1, f_2) \in E_1 \times E_2$ is sharp if and only if both $f_1$ and $f_2$ are sharp. Moreover, since  $(f_1, f_2) = (f_1, 0) + (0, f_2)$, such an element is one-dimensional if and only if one of the elements is one-dimensional and the other is 0.  
Let now $A = \{ a_1, \ldots, a_n \} \subset E_1$ and $B = \{ b_1, \ldots, b_m \} \subset E_2$ be contexts.  It is straightforward to see that $(a_1, 0) + \ldots + (a_n, 0) + (0, b_1) + \ldots + (0, b_m) = (1, 1)$, hence
\begin{equation*}
\{ (a_1, 0), \ldots, (a_n, 0), (0, b_1), \ldots, (0, b_m) \} \subset E_1 \times E_2
\end{equation*}
is a context. Moreover, any context in $E_1\times E_2$ is of this form.

Finally let $(f_1, f_2) \in E_1 \times E_2$ be any element. Since $E_1$ and $E_2$ are spectral there are contexts $A \subset E_1$ and $B \subset E_2$ such that $f_1 \in \bar{A}$ and $f_2 \in \bar{B}$. It follows that we have
\begin{equation*}
(f_1, f_2) = \sum_{i=1}^n \mu_i (a_i, 0) + \sum_{j=1}^m \nu_j (0, b_j)
\end{equation*}
which shows that $E_1 \times E_2$ is spectral effect algebra.
\end{proof}

The definition of the direct convex sum is a bit more involved. Let $E_1$, $E_2$ be convex effect algebras, then they are affinely isomorphic to the intervals $[0, u_1] \subset C_1 \subset V_1$ and $[0, u_2] \subset C_2 \subset V_2$ respectively, where for $i \in \{1, 2\}$ we have that $V_i$ are vectors spaces, $C_i$ are pointed cones and $u_i \in C_i$. Take the vector space $V_1 \times V_2$ that corresponds to the coproduct of the respective vector spaces and define a relation of equivalence $\approx$ by $(u_1, 0) = (0, u_2)$, i.e. $(x_1, y_2) \approx (x_2, y_2)$ if and only if we have
\begin{equation*}
(x_1, y_1) = (x_2, y_2) + \alpha (u_1, -u_2)
\end{equation*}
for some $\alpha \in \mathbb{R}$. It is clear that $\approx$ is reflexive, symmetric and transitive. Let $V$ be the quotient space 
$V= (V_1 \times V_2)/_\approx$ and define the cone  $C = (C_1 \times C_2)/_\approx \subset V$ as the set of equivalence classes containing an element of $C_1\times C_2$. Clearly, $C$ is a convex cone containing $u:=[(u_1,0)]_\approx= [(0,u_2)]_\approx$. To see that $C$ is pointed, simply note that  $[(x, y)]_\approx \in C \cap (-C)$ if and only if we can choose $x \in C_1$, $y \in C_2$ and for some $\alpha \in \mathbb{R}$ we have $x + \alpha u_1 \in (- C_1)$ and $y - \alpha u_2 \in (-C_2)$. 
This implies $0\le x\le -\alpha u_1$ in $(V_1,C_1)$ so that $\alpha\le 0$. At the same time, $0\le y\le \alpha u_2$ in $(V_2,C_2)$, so that $\alpha\ge 0$. This implies $\alpha=0$ and consequently $(x,y)=(0,0)$. 
\begin{definition}
The direct convex sum of effect algebras $E_1$ and $E_2$ is
\begin{equation*}
E_1 \oplus E_2:= [0, u] \subset C.
\end{equation*}
\end{definition}
We can see that $E_1 \oplus E_2$ is a set of equivalence classes of the form $[(\lambda f_1,(1-\lambda)f_2)]_\approx$, where 
$f_1\in E_1$, $f_2\in E_2$ and $\lambda\in [0,1]$. 
Note  that $E_1 \oplus E_2$ is a convex effect algebra by construction as it is an interval in an ordered vector space.

\begin{exm}[Direct convex sum of classical effect algebras]
We are going to construct the direct convex sum $E(S_2) \oplus E(S_2)$. Let $V_2$ be the vector space such that $E(S_2) \subset V_2$ and $\dim(V_2) = 2$ and let $P_2$ be the cone generated by $E(S_2)$. Let $e_1, e_2 \in E(S_2)$ be given as
\begin{align*}
&e_1 =
\begin{pmatrix}
1 \\
0
\end{pmatrix}
&&e_2 =
\begin{pmatrix}
0 \\
1
\end{pmatrix}
\end{align*}
then the cone $P_2$ is generated by $e_1$ and $e_2$, i.e. $P_2$ is the set of all positive multiples of convex combinations of $e_1$ and $e_2$. The cone $P_2 \times P_2$ is clearly generated by the vectors $(e_1, 0)$, $(e_2, 0)$, $(0, e_1)$ and $(0, e_2)$  so the cone $P = (P_2 \times P_2) \setminus_\approx$ must be generated by the equivalence classes of said vectors. Let us denote $u = [(u_2, 0)]_\approx = [(0, u_2)]_\approx$, $f = [(e_1, 0)]_\approx$ and $g = [(0, e_1)]_\approx$. It follows that $u-f = [(e_2, 0)]_\approx$ and $u-g = [(0, e_2)]_\approx$. Note that it follows that $f, g, u-f, u-g$ are linearly dependent while $(e_1, 0), (e_2, 0), (0, e_1), (0, e_2)$ are linearly independent. Since the cone $P$ is generated by the vectors $f, g, u-f, u-g$ we see that it is affinely isomorphic to the cone generated by the effect algebra $E(S)$ corresponding to the square state space. Moreover one can see that $E(S_2) \oplus E(S_2)$ is affinely isomorphic to $E(S)$.
\end{exm}

In a similar fashion to the aforementioned \cite[Proposition 1]{HeinosaariLeppajarviPlavala-noFreeInformation} one can again see that in the framework of general probabilistic theories the direct convex sum of effect algebras can be constructed as the effect algebra corresponding to the direct product of state spaces.

\begin{prop} \label{prop:composition-directSum}
The direct convex sum of spectral effect algebras is not a spectral effect algebra.
\end{prop}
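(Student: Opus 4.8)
The plan is to produce a counterexample, and the example immediately preceding the statement already supplies one: $E(S_2)$ is a classical effect algebra, hence spectral (it has the single context $\{e_1,e_2\}$), while by that example $E(S_2)\oplus E(S_2)$ is affinely isomorphic to the square state space algebra $E(S)$. So it suffices to show that $E(S)$ is not spectral. I would test spectrality on the effect $h=\frac12(f+g)$ from the square example: we already know $h$ is sharp but not extremal, so by Prop. \ref{prop:effAlg-sharp1D} it is not one-dimensional, i.e. $h\notin S_1(E(S))$. The only other fact about $h$ that I need is that $s_{11}$ is the unique state attaining $\max_s s(h)=1$, which is immediate since $s(h)=\frac12(s(f)+s(g))$ and $s(f),s(g)\le 1$, with simultaneous equality only at $s_{11}$.

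Then I would argue by contradiction: suppose $E(S)$ is spectral, so $h=\sum_{i=1}^n\mu_ic_i$ for some context $C=\{c_1,\dots,c_n\}\subset S_1(E(S))$. By Lemma \ref{lemma:properties-minMax}, $\max_i\mu_i=\max_s s(h)=1$; fix $i_0$ with $\mu_{i_0}=1$. The face $\hat c_{i_0}$ is nonempty, and for $s\in\hat c_{i_0}$ the context relations $s(c_j)=\delta_{i_0j}$ give $s(h)=\mu_{i_0}=1$, forcing $s=s_{11}$; in particular $c_{i_0}(s_{11})=1$. The key step is now to conclude $c_{i_0}=h$: since $\mu_{i_0}=1$ we have $h-c_{i_0}=\sum_{i\neq i_0}\mu_ic_i\ge 0$ in the ambient ordered vector space, i.e. $c_{i_0}\le h$; evaluating this inequality at the extreme points of $S$ gives $c_{i_0}(s_{00})=0$ and $c_{i_0}(s_{10}),c_{i_0}(s_{01})\le\frac12$. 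Feeding these into the affine relation $c_{i_0}(s_{00})+c_{i_0}(s_{11})=c_{i_0}(s_{10})+c_{i_0}(s_{01})$ (from $s_{00}+s_{11}=s_{10}+s_{01}$) together with $c_{i_0}(s_{11})=1$ forces $c_{i_0}(s_{10})=c_{i_0}(s_{01})=\frac12$. Thus $c_{i_0}$ agrees with $h$ on all four extreme points of $S$, hence $c_{i_0}=h$. But $c_{i_0}\in S_1(E(S))$ whereas $h\notin S_1(E(S))$, a contradiction. Hence $E(S)\cong E(S_2)\oplus E(S_2)$ is not spectral.

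The step I expect to carry the weight is the deduction $c_{i_0}=h$, but it reduces to the elementary two-dimensional geometry of affine functions on the square already used in the square-state-space example, so the real content is the interplay of Lemma \ref{lemma:properties-minMax}, the context relations $s(c_j)=\delta_{i_0j}$, and Prop. \ref{prop:effAlg-sharp1D}. The one point to handle with care is that $\hat c_{i_0}$ must be known to be nonempty (which holds precisely because $c_{i_0}$ is a sharp one-dimensional element) before one evaluates the context relations at a state in it.
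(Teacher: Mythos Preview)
Your argument that $E(S)$ is not spectral is correct and elegantly executed: the combination of Lemma~\ref{lemma:properties-minMax}, the identification of the unique maximizing state $s_{11}$, and the affine constraint coming from $s_{00}+s_{11}=s_{10}+s_{01}$ pins down $c_{i_0}=h$, which contradicts $h\notin S_1(E(S))$. So you have indeed shown that $E(S_2)\oplus E(S_2)$ is not spectral.

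The gap is one of scope, not of reasoning. The proposition is meant as a universal statement (parallel to Prop.~\ref{prop:composition-directProduct}): for \emph{any} pair of spectral effect algebras $E_1,E_2$, the direct convex sum $E_1\oplus E_2$ fails to be spectral. A single counterexample establishes only that spectrality is not always preserved, which is the strictly weaker existential reading. The paper's proof works in full generality: it shows that every element of $S_1(E_1\oplus E_2)$ is of the form $[(g_1,0)]_\approx$ with $g_1\in S_1(E_1)$ or $[(0,g_2)]_\approx$ with $g_2\in S_1(E_2)$, and that these two types are never summable, so every context lies entirely on one side. Consequently an element $[(\lambda f_1,(1-\lambda)f_2)]_\approx$ with $\lambda\in(0,1)$ and neither $f_i$ a multiple of the identity has no spectral decomposition. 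Your square-state-space argument is essentially this same obstruction made concrete (your $h$ is exactly such a mixed element), but to match the proposition you would need to lift the analysis of $S_1(E_1\oplus E_2)$ to arbitrary $E_1,E_2$ rather than rely on the explicit geometry of the square.
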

\begin{proof}
In a sense we are going to mimic the proof of Prop. \ref{prop:composition-directProduct} with the only difference that now we will show that there are two types of contexts on $E_1 \oplus E_2$: they are either of the form $\{ [(a_1, 0)]_\approx, \ldots, [(a_n, 0)]_\approx \}$ where $A = \{ a_1, \ldots a_n \} \subset E_1$ is a context, or  $\{ [(0, b_1)]_\approx, \ldots, [(0, b_m)]_\approx \}$ where $B = \{ b_1, \ldots b_m \} \subset E_2$ is a context.

Let $f=[(\lambda f_1, (1-\lambda) f_2)]_\approx \in S_1(E_1 \oplus E_2)$, then from
\begin{equation*}
[(\lambda f_1, (1-\lambda) f_2)]_\approx = \lambda [(f_1, 0)]_\approx + (1-\lambda) [(0, f_2)]_\approx
\end{equation*}
we see that both $\lambda [(f_1, 0)]_\approx$ and $(1-\lambda) [(0, f_2)]_\approx$ must be multiples of some $f$. This implies that there are some $t,\alpha\in [0,1]$ such that
\begin{align*}
&t\lambda f_1=\alpha u_1, &(1-t)(1-\lambda) f_2=\alpha u_2.
\end{align*}
Assuming that both $f_1$ and $f_2$ are nonzero, this implies that either both are multiples of identity or $\lambda\in \{0,1\}$. 
In both cases, $f$ is of the form  $f=[g_1,0]_\approx$ for some $g_1\in E_1$ or $f=[0,g_2]_\approx$ for some $g_2\in E_2$.
It is clear that $[(g_1,0)]\in S_1(E_1\oplus E_2)$  if and only if $g_1 \in S_1(E_1)$ and similarly for elements of the form $[(0,g_2)]_\approx$. 
From the definition, we can see that in this  case, $[(g_1,0)]_\approx$ and $[(0,g_2)]_\approx$ are not summable in $E_1 \oplus E_2$, hence these cannot belong to the same context.

It follows that  all contexts on $E_1 \oplus E_2$ are of the above two types. It is straightforward to see that if neither of 
 $f_1$ or $f_2$ is a multiple of identity and $\lambda \in (0, 1)$, the element $[(\lambda f_1, (1-\lambda) f_2)]_\approx $ cannot be given as  $\sum_{i=1}^n \mu_i [(a_i, 0)]_\approx$ or $\sum_{j=1}^m \nu_j [(0, b_j)]_\approx$.  Hence $E_1 \oplus E_2$ is not spectral.
\end{proof}

\section{Sharply determining state spaces}\label{sec:sharplyDetermining}
Here we consider a special case of spectral effect algebras, for which stronger properties can be proved.

\begin{definition}\label{def:sharply_determining}
We say that the state space $\states(E)$ is sharply determining if for any sharp $f \in E$ and any $g \in E$ such that $g \ngeq f$ there is a state $s \in \states(E)$ such that $s(f) = 1 > s(g)$.
\end{definition}

We first obtain  a way stronger version of Prop. \ref{prop:properties-decompositionMinMax} that is similar to \cite[Proposition 18]{vandeWetering-filters}.

\begin{prop} \label{prop:sharplyDetermining-decomposition}
Let $E$ be a spectral effect algebra such that $\states(E)$ is sharply determining. Let $A, B \subset E$ be contexts, $A = \{a_1, \ldots, a_n\}$ and $B = \{b_1, \ldots b_m\}$ such that for some $f \in E$ we have $f \in \bar{A} \cap \bar{B}$, specifically
\begin{equation*}
\sum_{i=1}^n \mu_i \left( \sum_{j_i = 1}^{n_i} a_{i_j} \right) = f = \sum_{k=1}^m \nu_k \left( \sum_{l_k = 1}^{m_k} b_{l_k} \right)
\end{equation*}
where $\mu_1 > \ldots > \mu_n > 0$ and $\nu_1 > \ldots > \nu_m > 0$. Then we have $n = m$, $\mu_i = \nu_i$ and $\sum_{j_i = 1}^{n_i} a_{i_j} = \sum_{l_i = 1}^{m_i} b_{l_i}$.
\end{prop}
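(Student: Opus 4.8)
The plan is to match the distinct coefficients together with their associated \emph{spectral projections} one at a time, starting from the largest, by induction. Write $p_i:=\sum_{j_i=1}^{n_i}a_{i_j}$ and $q_k:=\sum_{l_k=1}^{m_k}b_{l_k}$, and for $g\in E$ keep the notation $\hat g=\{s\in\states(E):s(g)=1\}$. Then $f=\sum_i\mu_ip_i=\sum_k\nu_kq_k$, and since $A$ and $B$ are contexts, $\sum_is(p_i)\le 1$ and $\sum_ks(q_k)\le 1$ for every state $s$. I claim that for $1\le r\le\min(n,m)$ one has $\mu_r=\nu_r$ and $p_r=q_r$, and moreover $n=m$; this is precisely the assertion of the proposition.

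For the inductive step, fix $r$ and assume $\mu_i=\nu_i$ and $p_i=q_i$ for all $i<r$ (an empty assumption when $r=1$). Set
\[
v_r:=f-\sum_{i<r}\mu_ip_i=\sum_{i\ge r}\mu_ip_i=\sum_{k\ge r}\nu_kq_k ,
\]
the last equality using the inductive hypothesis. Expanding the $p_i$ and $q_k$ into context elements shows that $v_r\in E$ and that $v_r$ has a spectral decomposition with respect to $A$ in which the elements of group $i$ carry coefficient $\mu_i$ for $i\ge r$ while all other elements of $A$ carry coefficient $0$; likewise with respect to $B$. Since $\mu_r>\mu_{r+1}>\dots>\mu_n>0$, the largest coefficient in the $A$-decomposition is $\mu_r$ and in the $B$-decomposition it is $\nu_r$, so Lemma~\ref{lemma:properties-minMax} gives $\mu_r=\max_{s\in\states(E)}s(v_r)=\nu_r$. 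Furthermore, every coefficient in the $A$-decomposition other than the one on group $r$ is strictly smaller than $\mu_r$; hence a state $s$ satisfies $s(v_r)=\mu_r$ if and only if it vanishes on every $p_i$ with $i\ne r$ (and on every remaining element of $A$), that is, if and only if $s(p_r)=1$. Therefore $\{s\in\states(E):s(v_r)=\mu_r\}=\hat p_r$, and the same argument applied to the $B$-decomposition, together with $\mu_r=\nu_r$, shows this set also equals $\hat q_r$. Thus $\hat p_r=\hat q_r$. Since $p_r$ and $q_r$ are sharp (a sum of a subcollection of the elements of a context is a sharp element, see the remark below), the sharply determining hypothesis applies in both directions: $\hat p_r\subseteq\hat q_r$ forces $q_r\ge p_r$ and $\hat q_r\subseteq\hat p_r$ forces $p_r\ge q_r$, so $p_r=q_r$, which closes the induction.

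Finally, to see $n=m$, suppose without loss of generality $n\le m$. The induction gives $\mu_i=\nu_i$ and $p_i=q_i$ for $i=1,\dots,n$, hence $v_{n+1}:=f-\sum_{i=1}^n\mu_ip_i=0$, while on the other hand $v_{n+1}=\sum_{k>n}\nu_kq_k$. If $n<m$, this last expression is a spectral decomposition whose largest coefficient is $\nu_{n+1}>0$, so by Lemma~\ref{lemma:properties-minMax} its maximum over states is $\nu_{n+1}>0$, contradicting $v_{n+1}=0$; therefore $n=m$. The one ingredient used above that is not among the results quoted so far is that an arbitrary sum of a subcollection of the elements of a context is a sharp element, and I expect establishing this (as a preliminary lemma) to be the main obstacle: it is exactly the point where order-determination of the state space is not enough and one must use the sharply determining hypothesis or a prior analysis of $S(E)$. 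One could alternatively avoid this by passing at each stage to the sub-effect algebra $[0,1-p_r]$ and inducting there, but that forces one to check separately that $[0,1-p_r]$ is again spectral with sharply determining state space, which does not look easier; the ``peeling within $E$'' above seems preferable.
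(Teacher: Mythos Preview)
Your proof is correct and follows essentially the same induction as the paper: peel off the top coefficient, use Lemma~\ref{lemma:properties-minMax} (equivalently Prop.~\ref{prop:properties-decompositionMinMax}) to match $\mu_r=\nu_r$, then use the sharply determining hypothesis together with sharpness of the partial sums $p_r,q_r$ to conclude $p_r=q_r$. The only cosmetic difference is that you first identify the maximizing face $\{s:s(v_r)=\mu_r\}=\hat p_r=\hat q_r$ and then invoke the definition, whereas the paper assumes $a'_1\ngeq b'_1$ directly and computes $s(f)$ two ways for the witnessing state; these are the same argument. Regarding the ``main obstacle'' you flag, namely that a sum of a subcollection of a context is sharp: this is not proved in the paper itself but quoted from \cite[Theorem 4.8]{GudderPulmannovaBugajskiBeltrametti-sharpEffects}, which shows that under a sharply determining state space any existing sum of sharp elements is sharp, so no separate work is required here.
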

\begin{proof}
Denote $a'_i = \sum_{j_i = 1}^{n_i} a_{i_j}$ and $b'_j = \sum_{l_k = 1}^{m_k} b_{l_k}$. As a result of Prop. \ref{prop:properties-decompositionMinMax} we already know that $\mu_1 = \nu_1$. We will only show that $a'_1 = b'_1$, the result will follow by repeating the same procedure for $f - \mu_1 a'_1 = f - \nu_1 b'_1$.

As first note that $a'_i$ and $b'_i$ are sharp \cite[Theorem 4.8]{GudderPulmannovaBugajskiBeltrametti-sharpEffects}. Assume that $a'_1 \ngeq b'_1$ then there is $s \in \states(E)$ such that $s(b'_1) = 1 > s(a'_1)$. We have $s(f) = \sum_{k=1}^m \nu_k s(b'_k) = \nu_1 = \mu_1$ as well as $s(f) = \sum_{i=1}^n \mu_i s(a'_i) < \mu_1$ which is a contradiction. Hence we must have $a'_1 \geq b'_1$ and by the same logic we must have also $b'_1 \geq a'_1$ which together yields $a'_1 = b'_1$.
\end{proof}

It was proved in \cite[Theorem 4.8]{GudderPulmannovaBugajskiBeltrametti-sharpEffects} that if $\states(E)$ is sharply determining, then all sharp elements are extremal, moreover, the sum of sharp elements, if it exists, is sharp. If  $E$ is also  spectral, then it is clear that the sharp elements are precisely the finite sums of one-dimensional sharp elements. The next results show that in this case $E$ is sharply dominating, see \cite{Gudder-sharplyDeterminingEffectAlgebras} for a definition, and the set $S(E)$ of all sharp elements with ordering induced from $E$ is an orthomodular lattice. We will need the following reformulation of the condition in Definition \ref{def:sharply_determining}. For the proof, it is enough to realize that $f$ is sharp if and only if $1-f$ is sharp. 

\begin{lemma}\label{lemma:sharply_determining} Let $E$ be a convex effect algebra. Then $\states(E)$ is sharply determining if and only if for any sharp $f \in E$ and any $g \in E$ such that $g \nleq f$ there is a state $s \in \states(E)$ such that $s(f) = 0 < s(g)$.
\end{lemma}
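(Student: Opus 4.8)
The plan is to exploit the order-reversing involution $x\mapsto 1-x$ on $E$ and show that the condition in Definition \ref{def:sharply_determining} is literally transported to the condition in the statement under this involution. First I would record the three elementary facts that make the translation work: (i) an element $x\in E$ is sharp if and only if $1-x$ is sharp, which is immediate from Definition \ref{def:sharpex} since the defining condition for sharpness is symmetric in $f$ and $1-f$; (ii) for every state $s\in\states(E)$ and every $x\in E$ one has $s(1-x)=1-s(x)$, because $x+(1-x)=1$ forces $s(x)+s(1-x)=1$; and (iii) for $x,y\in E$, $x\le y$ holds if and only if $1-y\le 1-x$, since both say that $y-x\in P$. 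In particular, for any $f,g\in E$, $g\nleq f$ is equivalent to $1-g\ngeq 1-f$.

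For the ``only if'' direction, assume $\states(E)$ is sharply determining, let $f\in E$ be sharp and let $g\in E$ satisfy $g\nleq f$. Put $h:=1-f$ and $g':=1-g$. By (i), $h$ is sharp, and by (iii) the hypothesis $g\nleq f$ is exactly $g'\ngeq h$. Applying the defining property of sharply determining state spaces to the sharp element $h$ and the element $g'$ yields a state $s\in\states(E)$ with $s(h)=1>s(g')$. Rewriting this with (ii) gives $1-s(f)=1>1-s(g)$, that is, $s(f)=0<s(g)$, as required. The ``if'' direction is proved by the identical argument with the roles of $(f,g)$ and $(1-f,1-g)$ interchanged: given the reformulated condition, apply it to the sharp element $1-f$ and the element $1-g$ (noting that $1-g\nleq 1-f$ iff $g\ngeq f$) to recover a state witnessing the original Definition \ref{def:sharply_determining}.

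Since each implication is obtained from the other by the same substitution, it suffices to verify the chain of equivalences once. The only point requiring care is the bookkeeping of the order relation under $x\mapsto 1-x$ and keeping $\ngeq$ and $\nleq$ straight in fact (iii); I do not expect any genuine obstacle beyond this.
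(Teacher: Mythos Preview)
Your proof is correct and follows exactly the approach indicated in the paper, which merely notes that it suffices to observe that $f$ is sharp if and only if $1-f$ is sharp; you have simply written out the details of the translation under $x\mapsto 1-x$ that the paper leaves implicit.
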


\begin{prop} Let $E$ be a spectral effect algebra with a sharply determining state space and let $a=\sum_i \mu_ia_i$ be a spectral decomposition.  Then $a^0:=\sum_{i,\mu_i>0} a_i$ is the smallest sharp element larger than $a$.
\end{prop}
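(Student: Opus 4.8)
The plan is to show two things: first that $a^0$ is sharp and dominates $a$, and second that any sharp element $b$ with $b \geq a$ must satisfy $b \geq a^0$. The first part is quick. Write the spectral decomposition $a = \sum_i \mu_i a_i$ with all $\mu_i \geq 0$; reordering, we may assume $\mu_i > 0$ for $i \le k$ and $\mu_i = 0$ for $i > k$, so $a^0 = \sum_{i\le k} a_i$. Since the $a_i$ are one-dimensional sharp elements and their partial sum $a^0$ exists (it is $\le u$ because $A$ is a context), $a^0$ is sharp by \cite[Theorem 4.8]{GudderPulmannovaBugajskiBeltrametti-sharpEffects} (sums of sharp elements are sharp when the state space is sharply determining). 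Moreover $a^0 - a = \sum_{i\le k}(1-\mu_i)a_i \ge 0$, so $a^0 \ge a$.

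The heart of the argument is minimality. Suppose $b \in S(E)$ with $b \ge a$ but $b \not\ge a^0$. I would apply the sharply determining property to the sharp element $b$: by Lemma \ref{lemma:sharply_determining}, since $a^0 \nleq b$, there is a state $s$ with $s(b) = 0 < s(a^0)$. Now $s(a^0) = \sum_{i\le k} s(a_i) > 0$, so $s(a_{i_0}) > 0$ for some $i_0 \le k$. Since $\mu_{i_0} > 0$, this forces $s(a) = \sum_i \mu_i s(a_i) \ge \mu_{i_0} s(a_{i_0}) > 0$. But $b \ge a$ gives $0 = s(b) \ge s(a) > 0$, a contradiction. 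Hence $b \ge a^0$, which proves $a^0$ is the smallest sharp element above $a$.

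I expect the only genuinely delicate point to be the appeal to Lemma \ref{lemma:sharply_determining} in the correct direction — one wants a state vanishing on $b$ (the sharp element) yet positive on $a^0$, which is exactly the contrapositive-flavored reformulation the lemma provides; using Definition \ref{def:sharply_determining} directly would instead give a state with $s(b)=1 > s(\text{something})$, which is the wrong shape here. The rest is routine: sharpness of $a^0$ is imported wholesale from \cite[Theorem 4.8]{GudderPulmannovaBugajskiBeltrametti-sharpEffects}, and $a^0 \ge a$ is an elementary cone inequality. One should also note in passing that the element $a^0$ does not depend on the chosen spectral decomposition, since it is characterized purely order-theoretically as the minimum of $\{b \in S(E) : b \ge a\}$, and minima are unique; this is worth a one-line remark since $a$ itself need not have a unique spectral decomposition.
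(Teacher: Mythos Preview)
Your proof is correct and follows essentially the same approach as the paper: both use Lemma~\ref{lemma:sharply_determining} to show that any state vanishing on a sharp $b\ge a$ must vanish on each $a_i$ with $\mu_i>0$ and hence on $a^0$, the only cosmetic difference being that the paper argues this directly while you phrase it as a proof by contradiction. Your added remarks justifying sharpness of $a^0$ via \cite[Theorem~4.8]{GudderPulmannovaBugajskiBeltrametti-sharpEffects} and noting the decomposition-independence of $a^0$ are welcome elaborations of points the paper leaves implicit.
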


\begin{proof} It is clear that $a^0$ is a sharp element and $a\le a^0$. Assume next that $b$ is a sharp element such that $a\le b$. Then for any $s\in \states(E)$ such that $s(b)=0$ we have $s(a)=\sum_i\mu_is(a_i)=0$, so that $s(a_i)=0$ whenever $\mu_i>0$. It follows that $s(a^0)=0$. By Lemma \ref{lemma:sharply_determining}, $a^0\le b$.
\end{proof}

\begin{prop}\label{prop:lattice} Let $E$  be a spectral effect algebra with a sharply determining state space. Let $f,g\in E$ be sharp. Then $(\lambda f+(1-\lambda)g)^0$ does not depend on $\lambda\in (1,0)$ and we have $(\lambda f+(1-\lambda)g)^0=f\vee g$ in $S(E)$. Moreover, we have $f\wedge g=1-((1-f)\vee (1-g))$.
\end{prop}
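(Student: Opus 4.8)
The plan is to use the previous proposition together with the sharply determining property to identify $(\lambda f + (1-\lambda) g)^0$ as the join of $f$ and $g$ in the poset $S(E)$. First I would fix $\lambda \in (0,1)$ and write $h_\lambda := \lambda f + (1-\lambda) g$; by the preceding proposition, $h_\lambda^0$ is the smallest sharp element above $h_\lambda$, so it exists and is sharp. The key observation is that for a state $s \in \states(E)$ we have $s(h_\lambda) = 0$ if and only if $s(f) = 0$ and $s(g) = 0$ (since $\lambda, 1-\lambda > 0$ and $s(f), s(g) \ge 0$), and this condition does not depend on $\lambda$. Combining this with the description of $h_\lambda^0$ from the proof of the previous proposition — namely that a sharp $b$ satisfies $h_\lambda^0 \le b$ precisely when $s(b) = 0$ forces $s(h_\lambda) = 0$ for all $s$ — I would conclude that the family of sharp elements dominating $h_\lambda$ is independent of $\lambda$, hence so is its least element $h_\lambda^0$. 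Write $c := h_\lambda^0$ for this common value.

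Next I would check $c = f \vee g$ in $S(E)$. Since $f \le c$: for any state $s$ with $s(c) = 0$ we get $s(h_\lambda) = 0$, hence $s(f) = 0$; since $f$ is sharp, Lemma \ref{lemma:sharply_determining} gives $f \le c$. Symmetrically $g \le c$, so $c$ is an upper bound of $\{f,g\}$ in $S(E)$. If $d \in S(E)$ is any sharp upper bound, then $f \le d$ and $g \le d$ give $h_\lambda = \lambda f + (1-\lambda) g \le d$, and minimality of $c = h_\lambda^0$ among sharp elements above $h_\lambda$ yields $c \le d$. Hence $c$ is the least upper bound, i.e. $f \vee g$ exists in $S(E)$ and equals $(\lambda f + (1-\lambda) g)^0$.

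For the meet, I would pass to complements: $f, g \in S(E)$ iff $1-f, 1-g \in S(E)$, so by the join part $(1-f) \vee (1-g)$ exists in $S(E)$. The map $x \mapsto 1-x$ is an order-reversing involution on $S(E)$ (it reverses $\le$ and is its own inverse), so it carries suprema to infima; therefore $f \wedge g$ exists and equals $1 - ((1-f) \vee (1-g))$. Since $f,g$ were arbitrary sharp elements, $S(E)$ is a lattice, and being a lattice with an order-reversing involution $x \mapsto 1-x$ satisfying $x \wedge (1-x) = 0$ (because any sharp element below both $x$ and $1-x$ is $0$ by definition of sharpness) it is in fact an ortholattice; orthomodularity would follow from the standard argument that in a sharply dominating effect algebra the sharp elements form an orthomodular poset, but for the statement at hand only the lattice identities are needed.

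The main obstacle I expect is the $\lambda$-independence step: one must make sure that "the set of sharp upper bounds of $h_\lambda$ is characterized by a $\lambda$-free condition on states" is used correctly, which requires the order-determining property (Proposition \ref{prop:archimedean}) so that a sharp element $b$ lies above $h_\lambda$ exactly when $s(b) \ge s(h_\lambda)$ for all states — but here we only need the weaker implication extracted in the previous proof, that $s(b) = 0 \Rightarrow s(h_\lambda) = 0$ for all $s$ forces $h_\lambda^0 \le b$ via Lemma \ref{lemma:sharply_determining}. Everything else is a routine manipulation of the order and the complement.
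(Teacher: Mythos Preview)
Your proposal is correct and follows essentially the same route as the paper: use the state characterization $s(h_\lambda)=0 \iff s(f)=s(g)=0$ together with Lemma~\ref{lemma:sharply_determining} to get $\lambda$-independence and to identify $h_\lambda^0$ as the least sharp upper bound of $\{f,g\}$, then invoke de Morgan for the meet. If anything, you are more explicit than the paper in verifying $f,g\le c$ (the paper leaves this step implicit in its chain of equivalences), and your closing remarks on orthomodularity anticipate the corollary that follows.
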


\begin{proof} Let us first observe that for any $f\in E$ and $s\in \states(E)$,  $s(f)=0$ iff $s(f^0)=0$, this is easy to see from the definition of $f^0$. Let $f,g$ be sharp and for $\lambda\in (0,1)$ let $p_\lambda:=(\lambda f+(1-\lambda)g)^0$. Then for $s\in \states(E)$, $s(p_{1/2})=0$ iff $s(\frac12(f+g))=0$ iff $s(f)=s(g)=0$ iff $s(\lambda f+(1-\lambda)g)=0$ iff $s(p_\lambda)=0$. By Lemma \ref{lemma:sharply_determining} this implies that $p_\lambda=p_{1/2}$ for all $\lambda\in (0,1)$. If $h\in E$ is any sharp element such that $f,g\le h$, then $s(h)=0$ implies $s(f)=s(g)=0$ so that $s(p_{1/2})=0$, hence $p_{1/2}\le h$. By \cite[Corollary 4.10]{GudderPulmannovaBugajskiBeltrametti-sharpEffects}, $h-p_{1/2}\in  S(E)$,  this shows that $p_\lambda=p_{1/2}=f\vee g$ in $ S(E)$. The last assertion follows by de Morgan laws.
\end{proof}

\begin{coro} Let $E$  be a spectral effect algebra with a sharply determining state space. Then $ S(E)$ is an orthomodular lattice.
\end{coro}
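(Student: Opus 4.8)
The plan is to assemble the claim from the three immediately preceding results together with the standard fact that a convex effect algebra with sharply determining state space is sharply dominating and has $S(E)$ closed under existing sums. First I would recall that for a spectral effect algebra with sharply determining state space, the set $S(E)$ of sharp elements coincides with the finite sums of one-dimensional sharp elements, and that $S(E)$ inherits from $E$ a bounded poset with least element $0$ and greatest element $1$, equipped with the orthocomplementation $f\mapsto 1-f$ (this is an orthocomplementation on $S(E)$ because $1-f$ is sharp whenever $f$ is, and the order-reversing and involutive properties are immediate from the vector-space structure). The content that still has to be checked is that this orthocomplemented poset is in fact a lattice and satisfies the orthomodular law.

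The lattice property is exactly Proposition \ref{prop:lattice}: for sharp $f,g$, the element $(\tfrac12 f+\tfrac12 g)^0$ is shown there to be the join $f\vee g$ in $S(E)$, and the meet $f\wedge g=1-((1-f)\vee(1-g))$ exists by the de Morgan identity; hence $S(E)$ is a lattice. So the only remaining step is the orthomodular identity: if $f,g\in S(E)$ with $f\le g$, then $g=f\vee(g\wedge(1-f))$. I would prove this at the level of states, exploiting Proposition \ref{prop:archimedean} (order determining states) and Lemma \ref{lemma:sharply_determining}. Writing $h:=g\wedge(1-f)$, the direction $f\vee h\le g$ is trivial since $f\le g$ and $h\le g$. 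For the reverse, by Proposition \ref{prop:archimedean} it suffices to show $s(g)\le s(f\vee h)$ for every state $s$; and since $f\vee h$ is sharp, by Lemma \ref{lemma:sharply_determining} it is enough to show that $s(f\vee h)=0$ implies $s(g)=0$. Using the observation from the proof of Proposition \ref{prop:lattice} that $s(p^0)=0$ iff $s(p)=0$, and that $s(f\vee h)=0$ forces $s(f)=0$ and $s(h)=0$, I would argue: from $s(h)=s(g\wedge(1-f))=0$ and the fact that $g\wedge(1-f)$ can be realized (via Proposition \ref{prop:lattice} applied to $1-g$ and $f$) as $(\tfrac12(1-g)+\tfrac12 f)^{0}$ complemented, one gets that $s$ vanishing on $g\wedge(1-f)$ together with $s(f)=0$ pins down $s(g)\le s(f)+s(g\wedge(1-f))$ at the level of the supporting functionals. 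Concretely, pick a spectral decomposition $g=\sum_i b_i$ into one-dimensional sharp summands; since $g\le$ nothing constrains this, group the $b_i$ according to whether $b_i\le 1-f$; the first group sums to an element $\le g\wedge(1-f)=h$, and the complementary group, added to $f$, should be forced to equal $g$ on precisely the states where $s(f)=s(h)=0$. This reduces the orthomodular law to the identity $f+ (g\wedge(1-f)) $-type relation for sharp elements, which follows from \cite[Corollary 4.10]{GudderPulmannovaBugajskiBeltrametti-sharpEffects} (differences of comparable sharp elements are sharp) and the spectral decomposition.

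The main obstacle I expect is precisely this last reduction: showing that $g\wedge(1-f)$, which is defined abstractly as a lattice meet in $S(E)$, actually equals the sharp element $g-f'$ for the appropriate sub-sum $f'\le f$ of a common refinement of spectral decompositions of $f$ and $g$ — i.e. that the order-theoretic meet agrees with the "obvious" set-theoretic complementation of one-dimensional pieces. Here I would lean on Proposition \ref{prop:sharplyDetermining-decomposition} to get a common refinement of the spectral decompositions of $f$, $g$ and $1-f$ into one-dimensional sharp elements that are mutually compatible, so that all the relevant joins and meets become honest sums and differences of these pieces; once everything is expressed in such a common context, the orthomodular law becomes the trivial set-identity $(\text{all pieces of }g) = (\text{pieces of }g\text{ below }f) \sqcup (\text{pieces of }g\text{ not below }f)$, and \cite[Corollary 4.10]{GudderPulmannovaBugajskiBeltrametti-sharpEffects} guarantees each partial sum is sharp. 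With the lattice structure from Proposition \ref{prop:lattice}, the orthocomplementation $f\mapsto 1-f$, the de Morgan laws, and this orthomodular identity in hand, $S(E)$ satisfies all the axioms of an orthomodular lattice, completing the proof.
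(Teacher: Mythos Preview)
Your route diverges from the paper's, and the specific argument you sketch has a real gap. The paper's proof is three lines: by \cite[Corollary 4.9]{GudderPulmannovaBugajskiBeltrametti-sharpEffects}, $S(E)$ is a sub-effect algebra of $E$ that is an orthoalgebra; by Proposition~\ref{prop:lattice} it is a lattice; and by the standard result \cite[Prop.~1.5.8]{DvurecenskijPulmannova-quantumStructures} that every lattice-ordered orthoalgebra is an orthomodular lattice, the claim follows. You instead try to verify the orthomodular identity $g=f\vee(g\wedge(1-f))$ for $f\le g$ directly via states and spectral decompositions.

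That direct route is feasible, but not the way you outline it. You propose to decompose $g=\sum_i b_i$ into one-dimensional sharp pieces, group the $b_i$ according to whether $b_i\le 1-f$, and then invoke Proposition~\ref{prop:sharplyDetermining-decomposition} to obtain a ``common refinement'' of the spectral decompositions of $f$, $g$ and $1-f$. Proposition~\ref{prop:sharplyDetermining-decomposition} does not say this: it only asserts that two spectral decompositions \emph{of the same element} agree on eigenvalues and eigen-sums. It produces no common context for two \emph{different} sharp elements, and nothing in the paper guarantees that comparable sharp $f\le g$ sit in a single context. So the ``common refinement'' step is unsupported.

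The fix is much lighter than what you attempt, and it is essentially why the orthoalgebra theorem holds. Given sharp $f\le g$, \cite[Corollary 4.10]{GudderPulmannovaBugajskiBeltrametti-sharpEffects} gives that $g-f$ is sharp; since $g-f\le g$ and $g-f\le 1-f$, we get $g-f\le h:=g\wedge(1-f)$. Now if $s(f\vee h)=0$ then $s(f)=s(h)=0$, hence $s(g-f)\le s(h)=0$ and $s(g)=s(f)+s(g-f)=0$. Lemma~\ref{lemma:sharply_determining} then yields $g\le f\vee h$, and the orthomodular law follows without any spectral bookkeeping or common refinements.
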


\begin{proof} By \cite[Corollary 4.9]{GudderPulmannovaBugajskiBeltrametti-sharpEffects}, $ S(E)$  is a sub-effect algebra in $E$ that is an orthoalgebra and by 
Proposition \ref{prop:lattice}, $ S(E)$ is a lattice. By \cite[Prop. 1.5.8]{DvurecenskijPulmannova-quantumStructures}, any lattice ordered orthoalgebra is an orthomodular lattice.
\end{proof}

\section{Conclusions and open questions} \label{sec:conclusion}
In this article we have proved that there can be either only one or uncountably many contexts contained in a spectral effect algebra as well as few other results concerning spectral effect algebras. We believe that spectral effect algebras provide  a suitable framework for the study of operational theories possessing some king of spectral properties and therefore are an interesting research direction for quantum foundations. In any case,  there are still quite a few open questions that will be left for future work.

\begin{enumerate}
\item What are the extreme points of a state space of a spectral effect algebra? In Prop. \ref{prop:properties-exposedStates} we have shown that all exposed points of the state space of a spectral effect algebra have the form $s=\hat{a}$ for $a\in S_1(E)$. It is an open question whether one can show a similar result for the extreme points of the state space.

\item Let $A=\{a_1,\dots,a_n\}$ be a context and let $\hat {A}=\conv(\cup_i\hat{a}_i)$. Since $\hat{a}_i$ are affinely independent, $\hat{A}=\oplus_c\hat{a}_i$ is their convex direct sum. Let $I$ be a set that indexes all of the contexts $A_\alpha$ of an effect algebra $E$; does it hold that $\states(E) = \cup_{\alpha \in I} \hat{A}_\alpha$? If  $\hat{a}$ are $E$-exposed points for all $a\in S_1(E)$, this would be even a stronger result than the one proposed in the question above.
 
\item When is the cardinality of contexts always the same? It is straightforward to see that all contexts in the effect algebras used in quantum theory have the same number of elements. It would be very interesting to know when this is true in general.

\item Is it possible to extend the results of Prop. \ref{prop:properties-decompositionMinMax} to a result similar to Prop. \ref{prop:sharplyDetermining-decomposition}? The property in question is to show that 
$\sum_{i=1}^n \mu_i a_i = f = \sum_{j=1}^m \nu_j b_j$ where $A = \{ a_1, \ldots, a_n \}$ and $B = \{ b_1, \ldots, b_m\}$ are contexts would imply that for every $\mu_i$ there is $\nu_j$ such that $\mu_i = \nu_j$. A stronger version of said result would be to show that Prop. \ref{prop:sharplyDetermining-decomposition} holds for all spectral effect algebras; this can be either done by showing that the state space of every spectral algebra is sharply determining or by other means.

\item For $a\in \states_1(E)$, does  $\hat{a}$ contain only one state? It is rather easy to find convex effect algebras where the sets $\hat{a}$ are not singletons, one can for example consider the effect algebra corresponding to the square state space (but note that this effect algebra is not spectral).

\item Does spectrality of the effect algebra implies any kind of weak duality between the effect algebra and the state space? If all of the sets $\hat{a}_i$ would contain only single point then it would be tempting to define a map $T: a_i \mapsto \hat{a}_i$ but it is of question whether the map would be well defined and affine. This would be a very strong result to prove or to at least find some conditions for when it holds.
\end{enumerate}

\begin{acknowledgments}
The authors are thankful to Leevi Lepp\"{a}j\"{a}rvi for helpful discussions in the early stages of the project, to John van de Wetering for helpful comments on the manuscript and for pointing out the results in \cite{AlfsenSchultz-stateSpaces} and to the editors and reviewers at Quantum for remarks on the readability of the manuscript. This research was supported by grant VEGA 2/0069/16 and by the grant of the Slovak Research and Development Agency under contract APVV-16-0073. MP acknowledges that this research was done during a PhD study at Faculty of Mathematics, Physics and Informatics of the Comenius University in Bratislava.
\end{acknowledgments}

\bibliographystyle{unsrtnat}
\bibliography{citations}

\end{document}